\tikzstyle{vecArrow} = [thick, decoration={markings,mark=at position
\tikzstyle{innerWhite} = [semithick, white,line width=1.4pt, shorten >= 4.5pt]
\newtheorem{definition}{Definition} 
\newtheorem{proposition}{Proposition} 
\newtheorem{lemma}{Lemma}
\newtheorem{theorem}{Theorem} 
\newtheorem{corollary}[definition]{Corollary}
\newtheorem{conjecture}[definition]{Conjecture}
\newtheorem{remark}[definition]{Remark}
\newtheorem{example}{Example} 
\newtheorem{question}[definition]{Question}
\def\bcj{\begin{conjecture}}
	\def\ecj{\end{conjecture}}
\def\bcr{\begin{corollary}}
	\def\ecr{\end{corollary}}
\def\bd{\begin{definition}}
	\def\ed{\end{definition}}
\def\bea{\begin{eqnarray}}
	\def\eea{\end{eqnarray}}
\def\bem{\begin{enumerate}}
	\def\eem{\end{enumerate}}
\def\bex{\begin{example}}
	\def\eex{\end{example}}
\def\bim{\begin{itemize}}
	\def\eim{\end{itemize}}
\def\bl{\begin{lemma}}
	\def\el{\end{lemma}}
\def\bma{\begin{bmatrix}}
	\def\ema{\end{bmatrix}}
\def\bpf{\begin{proof}}
	\def\epf{\end{proof}}
\def\bpp{\begin{proposition}}
	\def\epp{\end{proposition}}
\def\bqu{\begin{question}}
	\def\equ{\end{question}}
\def\br{\begin{remark}}
	\def\er{\end{remark}}
\def\bt{\begin{theorem}}
	\def\et{\end{theorem}}
\def\squareforqed{\hbox{\rlap{$\sqcap$}$\sqcup$}}
\def\qed{\ifmmode\squareforqed\else{\unskip\nobreak\hfil
		\penalty50\hskip1em\null\nobreak\hfil\squareforqed
		\parfillskip=0pt\finalhyphendemerits=0\endgraf}\fi}
\def\endenv{\ifmmode\;\else{\unskip\nobreak\hfil
		\penalty50\hskip1em\null\nobreak\hfil\;
		\parfillskip=0pt\finalhyphendemerits=0\endgraf}\fi}
\newenvironment{proof}{\noindent \textbf{{Proof.~} }}{\qed}
\def\Dbar{\leavevmode\lower.6ex\hbox to 0pt
	{\hskip-.23ex\accent"16\hss}D}
\def\url@leostyle{%
	\@ifundefined{selectfont}{\def\UrlFont{\sf}}{\def\UrlFont{\small\ttfamily}}}
\def\bcj{\begin{conjecture}}
	\def\ecj{\end{conjecture}}
\def\bcr{\begin{corollary}}
	\def\ecr{\end{corollary}}
\def\bd{\begin{definition}}
	\def\ed{\end{definition}}
\def\bea{\begin{eqnarray}}
	\def\eea{\end{eqnarray}}
\def\bem{\begin{enumerate}}
	\def\eem{\end{enumerate}}
\def\bex{\begin{example}}
	\def\eex{\end{example}}
\def\bim{\begin{itemize}}
	\def\eim{\end{itemize}}
\def\bl{\begin{lemma}}
	\def\el{\end{lemma}}
\def\bpf{\begin{proof}}
	\def\epf{\end{proof}}
\def\bpp{\begin{proposition}}
	\def\epp{\end{proposition}}
\def\bqu{\begin{question}}
	\def\equ{\end{question}}
\def\br{\begin{remark}}
	\def\er{\end{remark}}
\def\bt{\begin{theorem}}
	\def\et{\end{theorem}}
\def\btb{\begin{tabular}}
	\def\etb{\end{tabular}}
\begin{document}
 
	\title{ Local discrimination of generalized Bell states via commutativity}

	\author{Mao-Sheng Li}
	%\email{li.maosheng.math@gmail.com}
	\affiliation{ School of Mathematics,
		South China University of Technology, Guangzhou
		510641,  China}
	
	\author{Fei Shi}
	%\email{shifei@mail.ustc.edu.cn}
	\affiliation{School of Cyber Security,
		University of Science and Technology of China, Hefei, 230026, People's Republic of China}

	\author{Yan-Ling Wang}
	\email{wangylmath@yahoo.com}
	\affiliation{ School of Computer Science and Technology, Dongguan University of Technology, Dongguan, 523808, China}

	\begin{abstract}
		We study the distinguishability of   generalized Bell states under   local operations and classical communication. We introduce    the concept of maximally commutative set (MCS), subset of generalized Pauli matrices whose elements   are mutually commutative, and there is no other generalized Pauli matrix that is  commute with all the elements of this set. We find that  MCS can be considered as a detector  for local distinguishability of set $\mathcal{S}$  of    generalized Bell states.  In fact, we get an efficient criterion. That is, if  the difference set of $\mathcal{S}$ is disjoint with or completely contain in  some MCS, then the set  $\mathcal{S}$ is locally distinguishable.  Furthermore, we give a    useful characterization of   MCS for arbitrary dimension, which provides great convenience for   detecting the local discrimination of generalized Bell states. Our method can be generalized to more general settings which contains lattice qudit basis.  Results in [\href{https://journals.aps.org/prl/abstract/10.1103/PhysRevLett.92.177905}{Phys. Rev. Lett. \textbf{92}, 177905 (2004)}], [\href{https://journals.aps.org/pra/abstract/10.1103/PhysRevA.92.042320}{Phys. Rev. A \textbf{92}, 042320 (2015)}] and a recent work [\href{https://arxiv.org/abs/2109.07390}{arXiv: 2109.07390}] can   be deduced as special cases of our
		result.

		\begin{description}
			\item[PACS numbers] 03.67.Hk,03.65.Ud
		\end{description}% PACS, the Physics and Astronomy
	\end{abstract}                            % Classification Scheme.
	%\keywords{Suggested keywords}%Use showkeys class option if keyword
	%display desired
	\maketitle
	
	\section{Introduction}
	Quantum states discrimination is a fundamental task in quantum information processing. It is well known that a set of quantum states can be perfectly distinguished by global measurement if and only if the states of given set    are mutually orthogonal \cite{nils04}.  However,   our quantum states are usually distributed in  composite systems  with long distances, so only local operations and classical communication (LOCC) are allowed.  In such setting, a state is chosen from a known orthogonal set  of quantum states in a  composite systems and  the  task is to identify the state  under LOCC.   If the task can be  accomplished perfectly, we say that the set is \emph{locally distinguishable},
	otherwise, \emph{locally indistinguishable}. If an orthogonal set is locally indistinguishable, we also called that the set presents some kind of nonlocality \cite{Bennett99} in the sense that more  quantum information  could be inferred from global measurement than that from local operations.  Any two orthogonal multipartite states are showed to be locally distinguishable \cite{Walgate-2000}. 	Bennett \emph{et al.} \cite{Bennett99} presented the first example 
	of  orthogonal product states that are locally indistinguishable which reveals
the phenomenon of “quantum nonlocality without entanglement”.
	Results on the local distinguishability of quantum states have  been practically applied in quantum cryptography primitives such as   data hiding \cite{Terhal01,DiVincenzo02} and secret sharing \cite{Markham08,Rahaman15,WangJ17}.
	
	For general orthogonal sets of quantum states, it is   difficult to give a complete characterization of whether they are locally distinguishable or not. Therefore, most studies  (See \cite{Bennett99,ben99u,Walgate-2000, Ghosh-2001, walgate-2002,divin03,HSSH,Ghosh-2004,rin04,Watrous-2005,fan-2005,Nathanson-2005, Hayashi-etal-2006,nis06,Duan2007,Duan-2009,feng09,Bandyo-2011,BGK-2011,Yu-Duan-2012,Nathanson13,BN-2013,Cosentino-2013, Yus15,  B-IQC-2015,childs13,Cosentino-Russo-2014, Ha21,B-IQC-2015,  Li15, Yang13,zhang14,wang15,zhang15,zhang16-1,Xu-16-2,zhang16,Wang-2017-Qinfoprocess,Zhang-Oh-2017,halder, Xu20b, Li20,  Li18,Halder1909,  Halder20c,Halder19,Zhang1906,Shi20S,Tian20,Wang21, Banik21,Ha21} for an incomplete list)   focus on two extreme cases: sets of product states or sets of maximally entangled states.  In this paper, we restrict ourselves to the settings of maximally entangled cases.
	
Bell states are the most famous maximally entangled states and their local distinguishability has been well understood.  In fact, any two Bell states are locally distinguishable but any three or four are not \cite{Ghosh-2001}.  Nathanson \cite{Nathanson-2005} showed that any three maximally entangled states in $\mathbb{C}^3\otimes\mathbb{C}^3$ can be locally distinguished. Moreover, any $l>d$   maximally   entangled states in $\mathbb{C}^d\otimes \mathbb{C}^d$  are known to be locally indistinguishable \cite{Nathanson-2005}. Therefore, it is interesting to consider whether set of  maximally entangled states  with cardinality $l\leq d$ can be locally distinguishable or not?  Interestingly,   using the fact that  applying local unitary operation does not change the local distinguishability,   Fan \cite{fan-2005} showed that any $l$ generalized Bell states (GBSs) in $\mathbb{C}^d\otimes \mathbb{C}^d$ are locally distinguishable if $(l-1)l\leq 2d$ provided that  $d$ is a prime number.     Fan's  result was extended by  Tian \textit{et al.} to the prime power dimensional quantum system in \cite{Tian15} where they restricted themselves to  the mutually commuting qudit lattice states. Since Fan's result,  there has been lots of  works \cite{Tian15_2,Tian15,Tian16,Singal15,Wang17,Yuan20,Yang21,Hashimoto21} paid attention to the locally distinguishability of GBSs. However, the complete classification of  local distinguishablity of GBSs is still difficult to achieve.    On the other hand, set of  GBSs is an important and special subset of maximally entangled states,  which makes  the problem of local distinguishability of GBSs   important and interesting.  Motivated by a recent work \cite{Yuan21},  we find that the local distinguishability of GBSs can be detected by maximally commutative set (MCS) of GBSs.
	
	The rest of this article is organized as follows.  In Sec. \ref{sec:Review},  we introduce the matrices representation of generalized Bell states. Then we give a brief review of some known results on the sufficient conditions of locally distinguishable set of GBSs.  In  Sec. \ref{Sec:MainResult}, we present the definition of maximally commutative set    and show that it is useful for judging the locally distinguishability. After that, we present some examples of MCS and study the properties of general MCS.   Finally, we draw a conclusion and presented some questions in   Sec. \ref{Sec:Con_Dis}.

	%	\noindent\textbf{\large Two Basic Lemmas} \\[2mm]
	\noindent	
	
	\section{A Review of  local distinguishability of $\mathrm{GBSs}$}\label{sec:Review}
	Throughout this paper, we will use the following notations.	Let $d\geq 2$ be an integer. Denote    $\mathbb{Z}_d$ to be the ring defined over $\{0,1,\cdots,d-1\}$ with the sum operation  ``$+$" (here $i + j$ should be equal to the element $(i+j)\  \mathrm{ mod } \ d$) and  multiplication operation (computing the usual multiplication first then taking module $d$).
	Consider a bipartite quantum system $\mathcal{H}_A\otimes \mathcal{H}_B$ with both local dimensions equal to  $d$.  Suppose that $\{|0\rangle, |1\rangle, \cdots, |d-1\rangle \}$  is the    computational basis of a single qudit.  Under this computational basis, the standard maximally entangled state in this system can be expressed as  ${|\Psi_{00}\rangle=\frac{1}{\sqrt{d}} \sum_{i=0}^{d-1}|ii\rangle}.$  Generally, any maximally entangled state can be written in the form  $|\Psi_U\rangle=(I\otimes U)|\Psi_{00}\rangle$   for   some unitary matrix  $ U $ of dimensional $d$.  We often call $U$ the defining unitary matrix  of the maximally entangled state $|\Psi_U\rangle$. To define the generalized Bell states,  we define the following two operations
	$$X_d=\displaystyle\sum_{i=0}^{d-1}|i+1 \text{ mod } d\rangle\langle i|, \text{ and } Z_d=\displaystyle\sum_{i=0}^{d-1}\omega^i|i\rangle\langle i|,$$
	where $\omega=e^{\frac{2\pi \sqrt{-1} }{d}}$. Then the following $d^2$  orthogonal  MESs are  called as    generalized Bell states:
	\begin{equation}\label{eq1}
		\{|\Psi_{m,n}\rangle=(I\otimes X_d^mZ_d^n)|\Psi_{00}\rangle\big|  m,n\in \mathbb{Z}_d\}.
	\end{equation}
	And matrices in 	$\{  X_d^mZ_d^n \big  |\  m,n \in \mathbb{Z}_d\}$  are called the generalized Pauli matrices (GPMs). 
	For simplicity, we also  use $X$ and $Z$ to represent $X_d$ and $Z_d$  when the dimension is known.
	Due to  the one-one correspondence of MES and its defining unitary matrix, for convenience, we will treat the following three sets equally without distinction
	\begin{equation*}
		\mathcal{S}:= \{|\Psi_{m_i,n_i}\rangle  \}_{i=1}^l=\{X^{m_i}Z^{n_i}  \}_{i=1}^l=\{(m_i,n_i) \}_{i=1}^l.
	\end{equation*}		
	Our aim  in this paper is to provide some sufficient  condition such that the set $\mathcal{S}$ is locally distinguishable.  Now we gave a brief review of the relative results.

	Fan \cite{fan-2005}  noted that if all $m_i$ ($i=1,\cdots,l$) are
	distinct, the set $\mathcal{S}$ can be  locally  distinguished and set with this property is called $F$-type \cite{Hashimoto21}. For each $\alpha\in \mathbb{Z}_d$, defining $H_\alpha$ be the matrix whose $jk$ entry is $w^{-jk-\alpha s_k}/\sqrt{d}$ for $j,k=1,\cdots d-1$  and $s_k:=\sum_{i=k}^{d-1} i$. Then $H_\alpha$ is a unitary matrix and   $H_\alpha \otimes  H_\alpha^t$ transfers $|\Psi_{m_i,n_i}\rangle $ to  $|\Psi_{\alpha m_i+n_i,-m_i}\rangle. $ He found that if $d$ is a prime and $(l-1)l\leq 2d$, there exists an $\alpha$ such that  $H_\alpha \otimes  H_\alpha^t$ can transfer $\mathcal{S}$ to a  set of $F$-type.
	
There is a useful sufficient condition for local distinguishability  of   general maximally entangled states. Denote  $\mathcal{S}$ as the defining unitary matrices set, if there exists some nontrivial vector $|v\rangle \in \mathbb{C}^d$ such that
	\begin{equation}\label{eq:vector}
		\langle v|   U^\dagger V|v\rangle =0
		\end{equation} for 
	any different $U,V\in \mathcal{S}$, then the set of maximally entangled states corresponding to $\mathcal{S}$ is  one-way   distinguishable (hence locally distinguishable) \cite{Ghosh-2004,Nathanson13}.  If the set $\mathcal{S}$ is $F$-type, the vector $|v\rangle $ can be chosen as any vector of the computational basis, i.e., $|i\rangle, i\in \mathbb{Z}_d$. We define difference set $\Delta \mathcal{S}$ of  $\mathcal{S}:=\{U_i|i=1,2,\cdots,l\}$  as 
	\begin{equation}\label{eq:difference}
		\Delta \mathcal{S}=\{ U_i^\dagger U_j \ |\  1\leq i<j\leq l \}.
	\end{equation}
	Noticing that $$(X^{m_i}Z^{n_i})^\dagger X^{m_j}Z^{n_j}= \omega^{-(m_j-m_i)n_i} X^{m_j-m_i}Z^{n_j-n_i}.$$ Up to a phase, we can identify $\Delta \mathcal{S}$ as the set 
	$\{(m_j-m_i,n_j-n_i)| 1\leq i< j \leq l\}.$
	In order to find some nonzero vector $|v\rangle$ such that Eqs. \eqref{eq:vector} are satitsfied, the following lemma is important (See also in Ref. \cite{Yuan21}). 	
	%Two unitaries $U$ and $V$ are called Weyl commutative if	$VU=zUV$ where $z$ is a complex number.
	
		\begin{lemma}\label{lemma:eig} 
		For two unitary matrices $U$ and $V$, if  they satisfy 	$UV=zVU$ where $z$ is a complex number and  are not commutative, i.e., $z\neq 1$,
		then each eigenvector $|v\rangle$ of $V$ satisfies $\langle v|U|v\rangle=0$. 		
	\end{lemma}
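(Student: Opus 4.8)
The plan is to exploit the twisted commutation relation $UV = zVU$ by sandwiching it between $\langle v|$ and $|v\rangle$, where $|v\rangle$ is an eigenvector of $V$, say $V|v\rangle = \lambda|v\rangle$. Before doing so I would record the two facts about $\lambda$ that the argument needs, both of which come from unitarity of $V$. First, $|\lambda| = 1$, so in particular $\lambda \neq 0$. Second, taking the adjoint of the eigenvalue equation together with $V^\dagger = V^{-1}$ gives $\langle v|V = \lambda\langle v|$; the point is that $V^{-1}|v\rangle = \lambda^{-1}|v\rangle$, and for a unimodular eigenvalue $\overline{\lambda^{-1}} = \lambda$, so the bra acquires the factor $\lambda$ (not $\bar\lambda$).

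With these in hand I would compute the scalar $\langle v|UV|v\rangle$ in two different ways. Letting $V$ act to the right on $|v\rangle$ gives $\langle v|UV|v\rangle = \lambda\langle v|U|v\rangle$. Substituting $UV = zVU$ and then letting $V$ act to the left on $\langle v|$ gives instead $\langle v|UV|v\rangle = z\langle v|VU|v\rangle = z\lambda\langle v|U|v\rangle$. Equating the two expressions yields $(1-z)\lambda\langle v|U|v\rangle = 0$, and since $z \neq 1$ and $\lambda \neq 0$ we conclude $\langle v|U|v\rangle = 0$, as claimed.

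Equivalently, and perhaps more conceptually, one can observe that $U|v\rangle$ is itself an eigenvector of $V$, but for the rotated eigenvalue $z^{-1}\lambda$: indeed $VU|v\rangle = z^{-1}UV|v\rangle = z^{-1}\lambda\,U|v\rangle$. Since $V$ is unitary it is normal, and $z \neq 1$ forces $z^{-1}\lambda \neq \lambda$, so $|v\rangle$ and $U|v\rangle$ are eigenvectors of $V$ for distinct eigenvalues and are therefore orthogonal, which is exactly $\langle v|U|v\rangle = 0$. I do not anticipate a genuine obstacle in this lemma; the only step that requires care is the relation $\langle v|V = \lambda\langle v|$, which relies on $V$ being unitary (hence unimodular spectrum) rather than Hermitian. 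Had one carelessly written $\langle v|V = \bar\lambda\langle v|$, the factors $\lambda$ and $\bar\lambda$ would not cancel and the argument would break, so I would flag this as the one point to state explicitly.
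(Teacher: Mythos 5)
Your proof is correct and takes essentially the same approach as the paper, which likewise sandwiches the twisted commutation relation against the eigenvector, writing $\langle v|U|v\rangle=\langle v|\overline{\lambda}U\lambda|v\rangle=\langle v|V^\dagger UV|v\rangle=z\langle v|U|v\rangle$ via $\langle v|V^\dagger=\overline{\lambda}\langle v|$ --- the same computation as your two evaluations of $\langle v|UV|v\rangle$, with the paper's use of $V^\dagger$ from the adjoint of the eigenvalue equation even sidestepping the left-eigenvector factor $\langle v|V=\lambda\langle v|$ that you correctly derive and flag. Your second, normal-operator phrasing (that $U|v\rangle$ is a $V$-eigenvector with eigenvalue $z^{-1}\lambda\neq\lambda$, hence orthogonal to $|v\rangle$) is a valid equivalent restatement rather than a genuinely different route.
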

In fact, suppose that  $V|v\rangle=\lambda |v\rangle$ where $\overline{\lambda}\lambda  =1$.  We also have $\langle v| V^\dagger =\overline{\lambda} \langle v|.$ Therefore, $\langle v|U|v\rangle=\langle v|\overline{\lambda} U\lambda|v\rangle= \langle u|V^\dagger U V|u\rangle= z\langle v|U|v\rangle.$ Hence, $\langle v|U|v\rangle=0$ as $z\neq 1.$   A pair of unitaries that satisfy the first condition are called  Weyl commutative.

Fortunately, any pair of generalized Pauli matrices are Weyl commutative. In fact, for two pairs of $(m_i,n_i)$ and $(m_j,n_j)$ in $\mathbb{Z}_d\times  \mathbb{Z}_d$, we always have 
$$X^{m_i}Z^{n_i}X^{m_j}Z^{n_j}=\omega^{m_jn_i-m_in_j}X^{m_j}Z^{n_j}X^{m_i}Z^{n_i}.$$
Moreover, $X^{m_i}Z^{n_i}$ and $X^{m_j}Z^{n_j}$ are     commutative if and only if ${m_jn_i-m_in_j}\equiv 0\  \mathrm{ mod } \ d$. This condition can be formulated as  the detminant equation 
		\begin{equation}\label{eq:detdef}		       \left|\begin{array}{cc}
		m_i&n_i\\
		m_j&n_j
	\end{array}\right|\equiv 0 \mod d.
\end{equation}
We also call  that $(m_i,n_i)$ and $(m_j,n_j)$ are commutative if this condition is satisfied.
	
	For any GBS set $\mathcal{S}$,
	if there is a generalized Pauli matrix $V$ which is not commutative to every GPM $U\in\Delta \mathcal{S}$, by Eqs. \eqref{eq:vector},  \eqref{eq:difference}, and Lemma \ref{lemma:eig},  
	  each eigenvector $|v\rangle$ of $V$ satisfies $\langle v|U|v\rangle=0$
	and therefore the set $\mathcal{S}$  is locally distinguishable.

	Let $m, n\in\mathbb{Z}_{d}$, denote the solution set of the  following congruence equation   by $S(m,n)$,
\begin{eqnarray}\label{cogru2.1}
	nx-my=0 \mod d.
\end{eqnarray}
Therefore,  $S(m,n)$ denote the set of elements  in $ \mathbb{Z}_d\times \mathbb{Z}_d$ that are commute with $(m,n).$	In order to check whether is there any GPM $V$ that do not commute with all the elements in $\Delta \mathcal{S}.$ The authors of Ref. \cite{Yuan21}  defined the set
	\begin{eqnarray*}
		\mathcal{D}(\mathcal{S})\triangleq (\mathbb{Z}_d\times \mathbb{Z}_d) \setminus \bigcup_{(m,n)\in\Delta\mathcal{S}} S(m,n).
	\end{eqnarray*}
 By definition, $\mathcal{D}(\mathcal{S})$ denotes the set of all elements in $\mathbb{Z}_d\times \mathbb{Z}_d$ that do not commute with all the elements in $\Delta \mathcal{S}.$ 
Under this definition, they proved the following results.
	
	\begin{theorem}[See Ref. \cite{Yuan21}]\label{th3.1}
		Let $\mathcal{S}=\{(m_{i},n_{i})|4\le i\le l\le d\}$ be a GBS set in  $\mathbb{C}^{d}\otimes\mathbb{C}^{d}$,
		then the set $\mathcal{S}$ is local distinguishable when any of the following conditions is true.
		\begin{enumerate}
			\item[{\rm(1)}] The discriminant set $\mathcal{D}(\mathcal{S})$ is not empty.
			\item[{\rm(2)}] The set $\Delta\mathcal{S}$ is commutative.
			\item[{\rm(3)}] The dimension $d$ is a composite number, and for each $(m,n)\in\Delta\mathcal{S}$, $m$ or $n$ is invertible in $\mathbb{Z}_d$.
		\end{enumerate}
	\end{theorem}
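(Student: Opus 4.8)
The plan is to prove all three parts by exhibiting, in each case, a single vector $|v\rangle\in\mathbb{C}^{d}$ meeting the one-way distinguishability criterion, i.e.\ $\langle v|X^{m}Z^{n}|v\rangle=0$ for every difference $(m,n)\in\Delta\mathcal{S}$ (recall from \eqref{eq:vector} and the identification of $\Delta\mathcal{S}$ that this suffices). Part (1) is immediate: if $\mathcal{D}(\mathcal{S})\neq\emptyset$, choose $(m_{0},n_{0})\in\mathcal{D}(\mathcal{S})$ and put $V=X^{m_{0}}Z^{n_{0}}$. By definition of $\mathcal{D}(\mathcal{S})$, $V$ fails to commute with every element of $\Delta\mathcal{S}$, so Lemma \ref{lemma:eig} yields $\langle v|X^{m}Z^{n}|v\rangle=0$ for all differences whenever $|v\rangle$ is an eigenvector of $V$, and $\mathcal{S}$ is locally distinguishable.

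The first thing I would try for parts (2) and (3) is to reduce them to (1), i.e.\ to show they force $\mathcal{D}(\mathcal{S})\neq\emptyset$. I expect this to fail in general: a commutative difference set, or one in which every element has a unit coordinate, can have its commutants $\bigcup_{(m,n)\in\Delta\mathcal S}S(m,n)$ exhaust all of $\mathbb{Z}_{d}\times\mathbb{Z}_{d}$, leaving $\mathcal{D}(\mathcal S)$ empty. Hence the crux — and the main obstacle — is to build the distinguishing $|v\rangle$ \emph{directly} from \eqref{eq:vector}, rather than through a single anticommuting generalized Pauli matrix. This requires a genuinely different $|v\rangle$ for (2) and for (3).

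For part (2), assume $\Delta\mathcal{S}$ is commutative. Then the matrices $\{X^{m}Z^{n}:(m,n)\in\Delta\mathcal{S}\}$ are pairwise commuting unitaries (phases are scalars and do not affect this), so they admit a common orthonormal eigenbasis $\{|\phi_{j}\rangle\}_{j=0}^{d-1}$. I would take the flat superposition $|v\rangle=\tfrac{1}{\sqrt{d}}\sum_{j}|\phi_{j}\rangle$. Since each $X^{m}Z^{n}$ is diagonal in this basis, the off-diagonal terms vanish and $\langle v|X^{m}Z^{n}|v\rangle=\tfrac{1}{d}\sum_{j}\langle\phi_{j}|X^{m}Z^{n}|\phi_{j}\rangle=\tfrac{1}{d}\,\mathrm{Tr}(X^{m}Z^{n})$; as every nontrivial generalized Pauli matrix is traceless and all differences are nonzero, this is $0$ for each $(m,n)\in\Delta\mathcal{S}$. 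The only inputs to check are the standard facts that commuting unitaries share an eigenbasis and that $\mathrm{Tr}(X^{m}Z^{n})=d\,\delta_{m,0}\delta_{n,0}$; commutativity is exactly what lets a single $|v\rangle$ work for the whole family.

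For part (3), write $d=d_{1}d_{2}$ with $d_{1},d_{2}>1$, which is possible precisely because $d$ is composite — this is where compositeness is indispensable. I would use the lattice vector $|v\rangle=\tfrac{1}{\sqrt{d_{1}}}\sum_{j=0}^{d_{1}-1}|jd_{2}\rangle$, supported on the subgroup $d_{2}\mathbb{Z}_{d}$. A direct computation (the shift $X^{m}$ must carry the support into itself, and the residual root-of-unity sum must not cancel) gives $\langle v|X^{m}Z^{n}|v\rangle=1$ when $d_{2}\mid m$ and $d_{1}\mid n$, and $0$ otherwise. For each $(m,n)\in\Delta\mathcal{S}$ the hypothesis gives that $m$ or $n$ is a unit of $\mathbb{Z}_{d}$: if $m$ is a unit then $d_{2}\nmid m$ (since $d_{2}>1$), while if $n$ is a unit then $d_{1}\nmid n$; either way the nonvanishing condition $d_{2}\mid m\wedge d_{1}\mid n$ fails, so $\langle v|X^{m}Z^{n}|v\rangle=0$ for every difference and $\mathcal{S}$ is locally distinguishable. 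As a consistency check I would verify that this unit-coordinate hypothesis forces the defining pairs of $\mathcal{S}$ to be distinct modulo $(d_{2},d_{1})$, so that implicitly $l\le d$, in agreement with the ambient assumption of the theorem.
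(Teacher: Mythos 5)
Your proposal is correct, and it is actually more self-contained than the paper's own treatment: the paper quotes Theorem~\ref{th3.1} from Ref.~\cite{Yuan21} and only argues fragments of it in the surrounding text. Your part (1) coincides exactly with the paper's reasoning (Eq.~\eqref{eq:vector}, Weyl commutativity of GPMs, and Lemma~\ref{lemma:eig} applied to an eigenvector of a detector $X^{m_0}Z^{n_0}$ that commutes with nothing in $\Delta\mathcal{S}$). For (2) and (3) you diverge: the paper gives no direct proof of (2) at all --- it is deferred entirely to \cite{Yuan21}, and indeed the paper later cites Theorem~\ref{th3.1}(2) to settle the case $\Delta\mathcal{S}\subseteq\mathcal{C}$ in its own Theorem~\ref{thm:maxcom} --- whereas you prove it explicitly via the flat superposition $|v\rangle=\tfrac{1}{\sqrt{d}}\sum_j|\phi_j\rangle$ over a common orthonormal eigenbasis, using $\langle v|X^mZ^n|v\rangle=\tfrac{1}{d}\,\mathrm{Tr}(X^mZ^n)=0$ for $(m,n)\neq(0,0)$; this computation is sound and, if adopted, would remove the apparent circularity between Theorem~\ref{th3.1}(2) and Theorem~\ref{thm:maxcom}. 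For (3), the paper's route (given after Theorem~\ref{thm:maxcom}) is structural: for $d=pq$ it extends the commuting pair $\{X^p,Z^q\}$ to a maximally commutative set $\mathcal{C}$ and shows any pair with an invertible coordinate fails to commute with $X^p$ or $Z^q$, so $\Delta\mathcal{S}\cap\mathcal{C}=\emptyset$. Your vector $|v\rangle=\tfrac{1}{\sqrt{d_1}}\sum_{j=0}^{d_1-1}|jd_2\rangle$ is precisely a common unit eigenvector of $X^{d_2}$ and $Z^{d_1}$, so your direct computation ($\langle v|X^mZ^n|v\rangle=0$ unless $d_2\mid m$ and $d_1\mid n$, which a unit coordinate forbids since $d_1,d_2>1$) is the concrete realization of the paper's abstract MCS argument, with the bonus that it exhibits the distinguishing measurement explicitly; what the paper's route buys instead is the detector/MCS viewpoint that generalizes to Theorem~\ref{thm:maxcom}. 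Your side remarks also check out: neither (2) nor (3) reduces to (1) in general (e.g.\ $d=4$, $\mathcal{S}=\{(0,0),(0,2),(2,0)\}$ has commutative $\Delta\mathcal{S}$ but empty $\mathcal{D}(\mathcal{S})$), and the unit-coordinate hypothesis makes $i\mapsto(m_i \bmod d_2,\, n_i \bmod d_1)$ injective, consistent with $l\le d$.
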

	
		The nonemptyness of $\mathcal{D}(\mathcal{S})$ implies the local distinguishability of $\mathcal{S}$.  Therefore, the set $\mathcal{D}(\mathcal{S})$ can be called a $\it{discriminant\ set}$ of $\mathcal{S}$.

	%	\noindent\textbf{\large Main Results} \\[2mm]
	\section{Detector for local distinguishability  of $\mathrm{GBSs}$}\label{Sec:MainResult}	 	
	In the first case of  Theorem \ref{th3.1}, the nonempty of the discriminant set $\mathcal{D}(\mathcal{S})$ is equivalent to that there exists some $(s,t)\in\mathbb{Z}_d\times\mathbb{Z}_d $ such that
	$$ms-nt\neq 0, \ \ \forall (m,n)\in \Delta\mathcal{S}.$$
	That is, $X^sZ^t$ is not commute with  $X^mZ^n$. Therefore, any nonzero eigenvector $|v\rangle$ of  $X^sZ^t$ satisfies 
	$$ \langle v|X^mZ^n|v\rangle=0$$
	from which  one can conlude that the set $\mathcal{S}$ is locally distinguishable. From this point, we can call $X^sZ^t$ as a detector of   local discrimination of GBS. Simply, the ability of the detector $X^sZ^t$ can be defined as the set 
	\begin{equation}\label{PointDe}
		\mathcal{D}e(X^sZ^t)\triangleq (\mathbb{Z}_d\times \mathbb{Z}_d) \setminus   S(s,t).
	\end{equation}
	This denotes the set of all elements in $\mathbb{Z}_d\times \mathbb{Z}_d$ that do not commute with $(s,t)$. Then  the  one-way local distinguishability of  $\mathcal{S}$ can be detected by $X^sZ^t$  if and only if
	$\Delta\mathcal{S}\subseteq \mathcal{D}e(X^sZ^t).$

	In fact, we can introduce a stronger detector by the following observation.  If a set of detectors $\{X^{s_i}Z^{t_i}\}_{i=1}^n$ are commutative, they can share  a common eigenbasis $\{|v_j\rangle\}_{j=1}^d$.  Therefore, if 
	$$\Delta\mathcal{S}\subseteq \bigcup_{i=1}^n\mathcal{D}e(X^{s_i}Z^{t_i})$$ 
	we can also conclude that 
	the set $\mathcal{S}$ is one-way distinguishable. Therefore,  
the more elements of the detected set, the stronger its distinguishingability which motivates the following definition.

	%\bibliographystyle{IEEEtran}
	%	\bibliography{reference}

	%\section*{Competing Interests}
	%The authors declare that there are no competing interests.
	
	%\section*{Author Contribution} X.Z. and Y.-L. W. initialized the project. F.S., M.-S. L., and Y.-L. W. developed the idea and formulated the problem as it is presented. X.Z. and Y.-L. W. supervised the project. All authorscontributed to deriving the results and writing the paper.
	
	A subset  $\{X^{s_i}Z^{t_i}\}_{i=1}^n$ 	 of GBSs is call \emph{maximally commutative} if the elements of the given subset are mutually commute and there is no other GBS  which can commute with all the elements of the set. This can be written as the coordinates $\{(s_i,t_i)\}_{i=1}^n\subseteq \mathbb{Z}_d\times \mathbb{Z}_d$ such that $s_it_j=t_is_j$ for every $i,j$ but there is no $(s,t)\in \mathbb{Z}_d\times \mathbb{Z}_d $ such that $s_it=t_is$ for every $i$.
	
For any maximally commutative set of GBSs $\mathcal{C}:=\{X^{s_i}Z^{t_i}\}_{i=1}^n$,
	we defined a detector  as 
	$$ \mathcal{D}e(\mathcal{C}):= \bigcup_{(s,t)\in\mathcal{C}}\mathcal{D}e(X^{s}Z^{t}).$$ 
	Therefore, one conclude that if $\Delta\mathcal{S}\subseteq \mathcal{D}e(\mathcal{C})$, the set $\mathcal{S}$ is one-way distinguishable. On the other hand, one finds that  $\mathcal{D}e(\mathcal{C})$ is equal to 
	$\mathcal{P}_d\setminus \mathcal{C}$
	where $\mathcal{P}_d:=\{X^mZ^n| m,n\in \mathbb{Z}_d\}.$ In fact, every element in $\mathcal{P}_d$ but outside $\mathcal{C}$ must be not commute with one of element $X^sZ^t$ in $\mathcal{C}$. That is, it belongs to $\mathcal{D}e(X^{s}Z^{t})$.  It means that $\mathcal{P}_d\setminus \mathcal{C}\subseteq \mathcal{D}e(\mathcal{C})$. Obviously,  $\mathcal{D}e(\mathcal{C})\subseteq \mathcal{P}_d\setminus \mathcal{C}$. Thus, $\mathcal{D}e(\mathcal{C})=\mathcal{P}_d\setminus \mathcal{C}.$   Therefore, $\Delta\mathcal{S}\subseteq \mathcal{D}e(\mathcal{C})$ if and only if 
	$\Delta\mathcal{S}\cap \mathcal{C}=\emptyset.$ 
Moreover, if $\Delta\mathcal{S}\subseteq  \mathcal{C},$ that is, the elements in  $\Delta\mathcal{S}$ are mutually commutative. By Theorem \ref{th3.1}, the set $\mathcal{S}$ is also locally distinguishable.

	\begin{theorem}\label{thm:maxcom}
		Let $\mathcal{S}$ be a GBS set in  $\mathbb{C}^{d}\otimes\mathbb{C}^{d}$ and   $\mathcal{C}$ be a set of maximally commutative GBS of dimensional $d$.   If  	$\Delta\mathcal{S}\cap \mathcal{C}=\emptyset$ or  $\Delta\mathcal{S}\subseteq  \mathcal{C},$  then the set $\mathcal{S}$ is locally distinguishable (see Fig. \ref{fig:relation} for an intuitive view of the conditions). 
	\end{theorem}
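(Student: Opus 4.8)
The plan is to split along the two hypotheses, since each triggers a distinct mechanism already prepared in the preceding discussion. First I would handle $\Delta\mathcal{S}\cap\mathcal{C}=\emptyset$. Here the engine is the identity $\mathcal{D}e(\mathcal{C})=\mathcal{P}_d\setminus\mathcal{C}$ recorded just above the statement; its nontrivial inclusion $\mathcal{P}_d\setminus\mathcal{C}\subseteq\mathcal{D}e(\mathcal{C})$ is precisely where maximality of $\mathcal{C}$ is used, guaranteeing that every generalized Pauli matrix lying outside $\mathcal{C}$ fails to commute with at least one member of $\mathcal{C}$. Combined with $\Delta\mathcal{S}\subseteq\mathcal{P}_d$, the hypothesis becomes the inclusion $\Delta\mathcal{S}\subseteq\mathcal{D}e(\mathcal{C})=\bigcup_{(s,t)\in\mathcal{C}}\mathcal{D}e(X^sZ^t)$. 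Since $\mathcal{C}$ is a commuting family of generalized Pauli matrices it is simultaneously diagonalizable, so I would fix one vector $|v\rangle$ from a common eigenbasis $\{|v_j\rangle\}_{j=1}^d$. Then for each $(m,n)\in\Delta\mathcal{S}$ the inclusion supplies an index $i$ with $X^mZ^n$ not commuting with $X^{s_i}Z^{t_i}$; as any two generalized Pauli matrices are Weyl commutative and $|v\rangle$ is an eigenvector of $X^{s_i}Z^{t_i}$, Lemma \ref{lemma:eig} gives $\langle v|X^mZ^n|v\rangle=0$.

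The step I expect to carry the actual content is the observation that this one fixed $|v\rangle$ works simultaneously for every element of $\Delta\mathcal{S}$, even though distinct difference elements may be annihilated by distinct detectors in $\mathcal{C}$. This is exactly what commutativity of $\mathcal{C}$ buys: a common eigenbasis makes $|v\rangle$ an eigenvector of all detectors at once, so the vanishing expectation propagates uniformly across $\Delta\mathcal{S}$. Reading this back through $U_i^\dagger U_j=\omega^{-(m_j-m_i)n_i}X^{m_j-m_i}Z^{n_j-n_i}$ and the identification of $\Delta\mathcal{S}$ with the difference coordinates, the scalar phase factors out and yields $\langle v|U_i^\dagger U_j|v\rangle=0$ for all $i\neq j$, which is precisely condition \eqref{eq:vector}. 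By the sufficient condition of Refs.~\cite{Ghosh-2004,Nathanson13} the corresponding set of maximally entangled states is one-way, hence locally, distinguishable.

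Finally, the case $\Delta\mathcal{S}\subseteq\mathcal{C}$ is immediate: as $\mathcal{C}$ consists of mutually commuting generalized Pauli matrices, its subset $\Delta\mathcal{S}$ is itself commutative, so condition (2) of Theorem \ref{th3.1} applies and $\mathcal{S}$ is locally distinguishable. The two cases exhaust the hypotheses. I do not anticipate a genuine obstacle once the common-eigenvector observation is isolated; the remainder is the bookkeeping already built into the definitions of $\mathcal{D}e(\cdot)$ and $\Delta\mathcal{S}$.
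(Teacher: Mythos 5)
Your proof is correct and follows essentially the same route as the paper: the paper establishes the theorem through exactly this pre-theorem discussion, namely the identity $\mathcal{D}e(\mathcal{C})=\mathcal{P}_d\setminus\mathcal{C}$ (with maximality supplying the nontrivial inclusion), a common eigenvector of the commuting family $\mathcal{C}$ fed into Lemma \ref{lemma:eig} and the one-way criterion of Eq.~\eqref{eq:vector}, and condition (2) of Theorem \ref{th3.1} for the case $\Delta\mathcal{S}\subseteq\mathcal{C}$. Your explicit isolation of the common-eigenvector step as the load-bearing observation matches the paper's reasoning; there is nothing to correct.
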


	\begin{figure}[h]
		\centering
		\includegraphics[scale=0.7]{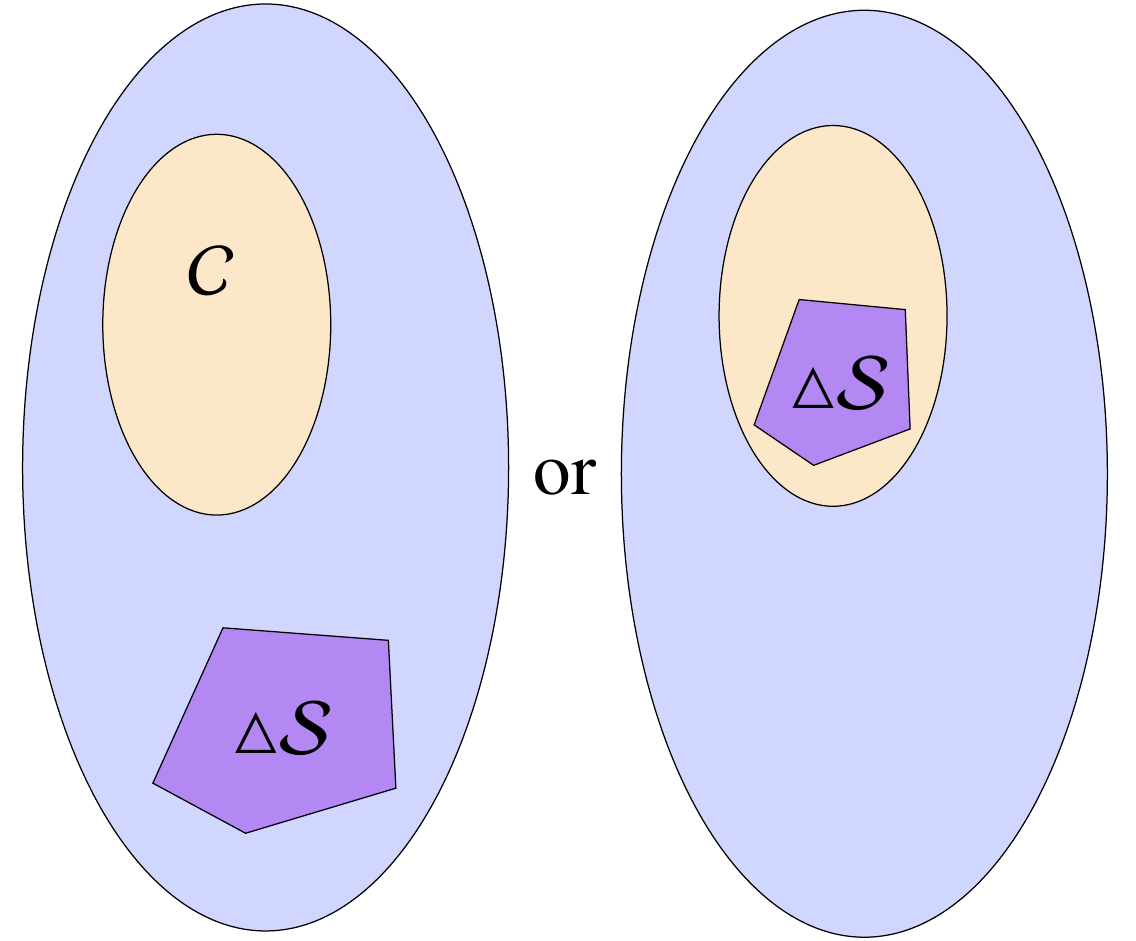}
		\caption{Here $\mathcal{C}$ represents a maximally commutative set of GBS and $\Delta \mathcal{S}$ is the difference set of $\mathcal{S}$. If  $\Delta \mathcal{S}$ and the detector $\mathcal{C}$ are in one of the above relations , then  the set $\mathcal{S}$ is locally distinguishable.  }\label{fig:relation}
	\end{figure}
	The result of Theorem \ref{thm:maxcom} implies  that of Theorem \ref{th3.1}.
	If $\mathcal{D}(\mathcal{S})$  is nonempty (that is the distinguishability of $\mathcal{S}$ can be detected by the first condition of Theorem \ref{th3.1}), then there must exist some maximally commutative  set  $\mathcal{C}$ of GBS satisfies the condition of Theorem \ref{thm:maxcom}. In fact,  the nonempty of the discriminant set $\mathcal{D}(\mathcal{S})$ is equivalent to  the existence of $X^sZ^t$ that do not commute with every element of $\Delta\mathcal{S}$ but such an $X^sZ^t$ can be extended to be a maximally commutative set $\mathcal{C}$ of GBS. As the elements in $\mathcal{C}$ are all commute with $X^sZ^t$, therefore, $\Delta\mathcal{S}\cap \mathcal{C}=\emptyset.$	Moreover,  if $d$ is a composite number, and suppose that $d=pq$ where $p,q\geq 2$ are two integers. Clearly,  $X^p$ is commute with $Z^q$, therefore, they can extend to a maximally commutative set of GBS, said, $\mathcal{C}$. if   $s$ or $t$ is invertible in $\mathbb{Z}_d$, we claim that $X^sZ^t\notin \mathcal{C}.$ In fact,  as $ZX=\omega XZ$, if $s$ is invertable, then $Z^q(X^sZ^t)=\omega^{sq}(X^sZ^t)Z^q\neq (X^sZ^t)Z^q $. If  $t$ is invertible, then $(X^sZ^t)X^p=\omega^{tp}X^p(X^sZ^t)\neq X^p(X^sZ^t).$  It also means that $\Delta \mathcal{S} \cap \mathcal{C}=\emptyset.$ Therefore, if $\Delta\mathcal{S}$ contains those elements one of whose coordinates is invertible in $\mathbb{Z}_d$, then the set $\mathcal{C} $ can detect the one-way distinguishability of  $\mathcal{S}$.
	
	Therefore, it is important to find out all the   maximally commutative sets of GBSs. Now we present some examples in the low dimensional cases.

	\begin{example}\label{ex:d=3}
		There are exactly four classes of maximally commute sets  of GBSs in $\mathbb{C}^3\otimes\mathbb{C}^3.$
		$$
		\begin{array}{rlccc}
			\mathcal{C}_1&=&\{(0,0),&(0,1),&(0,2)\},\\
			\mathcal{C}_2&=&\{(0,0),&(1,0),&(2,0)\},\\
			\mathcal{C}_3&=&\{(0,0),&(1,1),&(2,2)\},\\
			\mathcal{C}_4&=&\{(0,0),&(1,2),&(2,1)\}.
		\end{array}
		$$
	\end{example}
	\begin{example}\label{ex:d=4}
		There are exactly seven classes of maximally commute sets  of GBSs in $\mathbb{C}^4\otimes\mathbb{C}^4.$
		$$
		\begin{array}{rlcccc}
			\mathcal{C}_1&=&\{(0,0),&(0,1),&(0,2),&(0,3)\},\\
			\mathcal{C}_2&=&\{(0,0),&(0,2),&(2,0),&(2,2)\},\\
			\mathcal{C}_3&=&\{(0,0),&(0,2),&(2,1),&(2,3)\},\\
			\mathcal{C}_4&=&\{(0,0),&(1,0),&(2,0),&(3,0)\},\\
			\mathcal{C}_5&=&\{(0,0),&(1,1),&(2,2),&(3,3)\},\\
			\mathcal{C}_6&=&\{(0,0),&(1,2),&(2,0),&(3,2)\},\\
			\mathcal{C}_7&=&\{(0,0),&(1,3),&(2,2),&(3,1)\}.\\
		\end{array}
		$$
	\end{example}

	\begin{example}\label{ex:d=8}
		There are exactly 15 classes of maximally commute sets  of GBSs in $\mathbb{C}^8\otimes\mathbb{C}^8.$ Here we do not write out the coordinate $(0,0)$ which belongs to all the 15 sets.
		{\small		 $$\begin{array}{rlcccccccc}
				\mathcal{C}_1&=&\{	 (0,1),&     (0,2),&     (0,3),&     (0,4),&     (0,5),&     (0,6),&     (0,7)\},\\ 
				\mathcal{C}_2&=&\{	 (0,2),&     (0,4),&     (0,6),&     (4,0),&     (4,2),&     (4,4),&     (4,6)\},\\
				\mathcal{C}_3&=&\{	 (0,2),&     (0,4),&     (0,6),&     (4,1),&     (4,3),&     (4,5),&     (4,7)\},\\
				\mathcal{C}_4&=&\{	 (0,4),&     (2,0),&     (2,4),&     (4,0),&     (4,4),&     (6,0),&     (6,4)\},\\
				\mathcal{C}_5&=&\{	 (0,4),&     (2,1),&     (2,5),&     (4,2),&     (4,6),&     (6,3),&     (6,7)\},\\
				\mathcal{C}_6&=&\{	 (0,4),&     (2,2),&     (2,6),&     (4,0),&     (4,4),&     (6,2),&     (6,6)\},\\
				\mathcal{C}_7&=&\{	 (0,4),&     (2,3),&     (2,7),&     (4,2),&     (4,6),&     (6,1),&     (6,5)\},\\
				\mathcal{C}_8&=&\{	 (1,0),&     (2,0),&     (3,0),&     (4,0),&     (5,0),&     (6,0),&     (7,0)\},\\
				\mathcal{C}_9&=&\{	 (1,1),&     (2,2),&     (3,3),&     (4,4),&     (5,5),&     (6,6),&     (7,7)\},\\
				\mathcal{C}_{10}&=&\{	 (1,2),&     (2,4),&     (3,6),&     (4,0),&     (5,2),&     (6,4),&     (7,6)\},\\
				\mathcal{C}_{11}&=&\{	 (1,3),&     (2,6),&     (3,1),&     (4,4),&     (5,7),&     (6,2),&     (7,5)\},\\
				\mathcal{C}_{12}&=&\{	 (1,4),&     (2,0),&     (3,4),&     (4,0),&     (5,4),&     (6,0),&     (7,4)\},\\
				\mathcal{C}_{13}&=&\{	 (1,5),&     (2,2),&     (3,7),&     (4,4),&     (5,1),&     (6,6),&     (7,3)\},\\
				\mathcal{C}_{14}&=&\{	 (1,6),&     (2,4),&     (3,2),&     (4,0),&     (5,6),&     (6,4),&     (7,2)\},\\
				\mathcal{C}_{15}&=&\{	 (1,7),&     (2,6),&     (3,5),&     (4,4),&     (5,3),&     (6,2),&     (7,1)\}.\\
			\end{array}
			$$
		}
		
	\end{example}

	Using these MCSs, we can show that  Theorem \ref{thm:maxcom} is strictly   powerful than Theorem \ref{th3.1} when the dimension $d=8$. Set $\mathcal{S}:=\{ \mathbb{I},X^5Z^6,X^6Z^3,X^6Z^5,X^7Z^6\}$ whose difference set $\Delta \mathcal{S}$ is 
	{	$$
		\begin{array}{l}
			(5,6),(6,3),(6,5),(7,6),(1,5),\\
			(1,7),(2,0),(0,2),(1,3),(1,1).
		\end{array}$$
	}
	One can check that $\mathcal{D}(\mathcal{S})=\emptyset$, $\Delta \mathcal{S}$ is non-commutative and  neither coordinates of $(2,0)$ are  invertible in $\mathbb{Z}_8$. Therefore, Theorem \ref{th3.1} fails to detect the distinguishability of this set.
	However, one finds that $\Delta \mathcal{S}\cap \mathcal{C}_6=\emptyset.$ That is, the local distinguishablity of $\mathcal{S}$ can be detected by $\mathcal{C}_6$. Moreover, one can check that neither $\Delta \mathcal{S}\cap \mathcal{C}_i=\emptyset$ nor $\Delta \mathcal{S}\subseteq \mathcal{C}_i$ when $i\neq 6$. That is, among the 15 classes, $\mathcal{C}_6$ is the only detector that can detect the local distinguishability of $\mathcal{S}.$  More numerical results comparing the power of   Theorem \ref{th3.1} and  Theorem  \ref{thm:maxcom} can be seen in the figure \ref{fig:Sup} (we randomly generated $N$ sets of  $d$ dimensional GBSs  with cardinality $n$ and find out the numbers $N_{1}$ and $N_{2}$ of  sets whose local distinguishability can be detected by Theorem \ref{th3.1} and  Theorem  \ref{thm:maxcom} respectively. The corresponding successful rates  are defined by $N_1/N$  and $N_2/N$).

	\begin{figure}[h]
		\centering
		\includegraphics[scale=0.59]{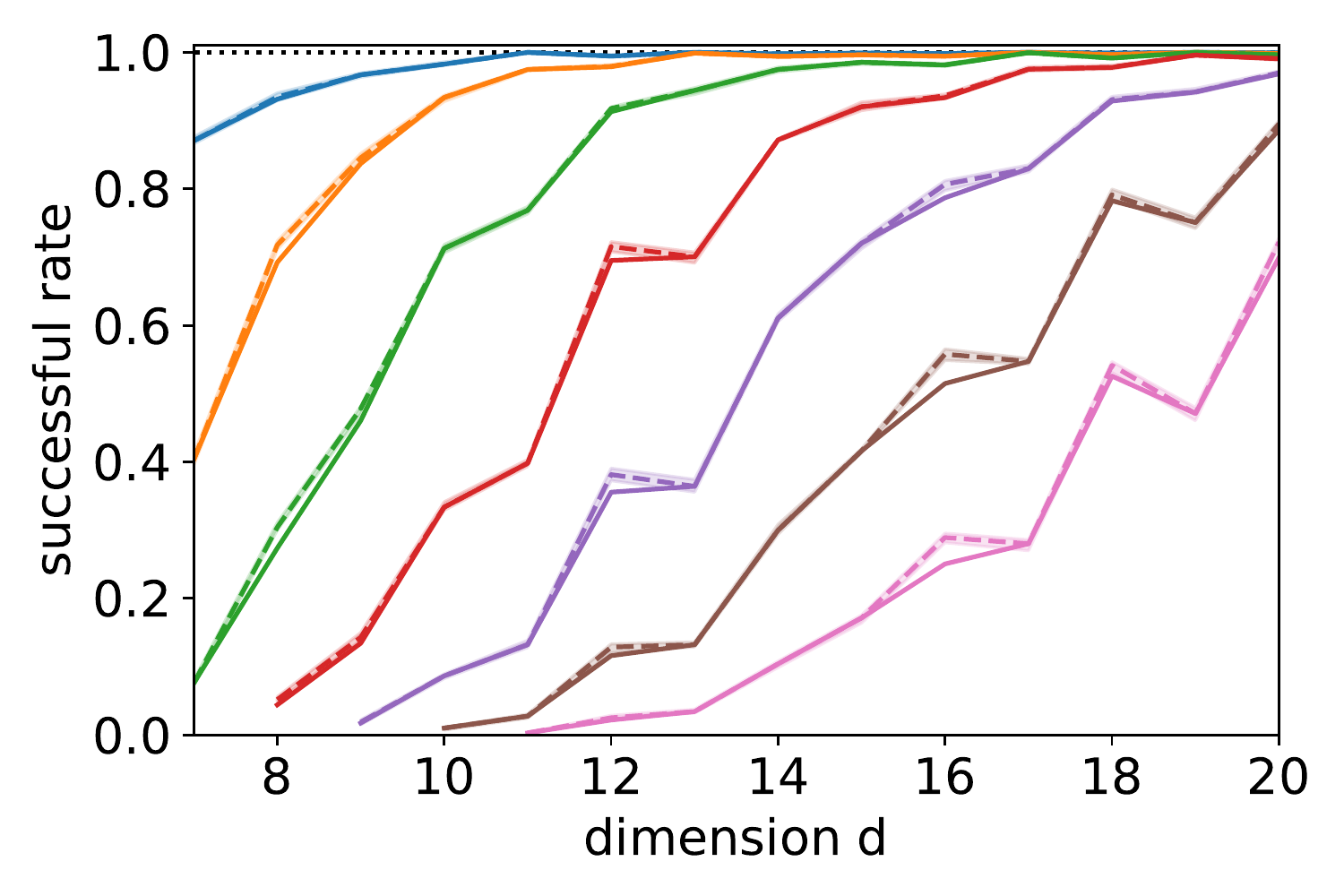}
		\caption{For $n=5,6,7,8,9,10,11$, we randomly generate $N=100000$ sets of GBSs whose cardinality are all $n$ for $d=7-20$ respectively. The solid lines represent the successful rates  by Theorem \ref{th3.1} and  the dot lines represent the successful rates  by Theorem \ref{thm:maxcom}. The lines from above to below represent  sets with cardinality  being from $5$  to $11$  respectively. Here the starting point of each curve with respect  to $n$ is with dimension $d\geq n$.}\label{fig:Sup}
	\end{figure}

	\begin{proposition}\label{prop:d=p}
		Let $p\geq 2$ be a prime. Then there are exactly $p+1$ classes of maximally commutative sets of  GBSs in $\mathbb{C}^p\otimes \mathbb{C}^p.$
	\end{proposition}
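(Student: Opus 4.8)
The plan is to use that for prime $p$ the ring $\mathbb{Z}_p$ is a field, so that $\mathbb{Z}_p\times\mathbb{Z}_p$ is the two-dimensional vector space $\mathbb{F}_p^2$ and the commutativity condition \eqref{eq:detdef} becomes a statement of linear algebra. First I would restate the defining relation: two coordinates $(m_i,n_i)$ and $(m_j,n_j)$ commute exactly when the determinant $m_in_j-m_jn_i$ vanishes modulo $p$, that is, when the two vectors are linearly dependent over $\mathbb{F}_p$. Since the identity $(0,0)$ commutes with every GBS it belongs to every maximally commutative set, so the content lies entirely in the nonzero members, which must be pairwise linearly dependent.

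The key observation is that over a field the vanishing of the $2\times2$ determinant means the two coordinate vectors span a subspace of dimension at most one; hence whenever $(m_i,n_i)\neq 0$, any commuting $(m_j,n_j)$ equals $\lambda(m_i,n_i)$ for some $\lambda\in\mathbb{F}_p$. Consequently the nonzero members of a commutative set are pairwise proportional and all lie on a single one-dimensional subspace $L\subseteq\mathbb{F}_p^2$. I would then settle maximality in both directions: any vector outside $L$ is linearly independent from some nonzero element of $L$ and so fails to commute with it, whence $L$ admits no proper enlargement; conversely, a commutative set omitting some point of $L$ is not maximal because that point commutes with all of $L$. This yields a bijection between maximally commutative sets and one-dimensional subspaces of $\mathbb{F}_p^2$, each such set being precisely a full line, i.e. the origin together with its $p-1$ nonzero points.

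It then remains to count the one-dimensional subspaces of $\mathbb{F}_p^2$. There are $p^2-1$ nonzero vectors, and each line through the origin contains exactly $p-1$ of them; since distinct lines meet only at the origin, these nonzero sets partition $\mathbb{F}_p^2\setminus\{0\}$, giving $(p^2-1)/(p-1)=p+1$ lines and hence $p+1$ maximally commutative sets. As a check, this recovers the four sets of Example \ref{ex:d=3} for $p=3$.

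The step I expect to demand the most care is pinning down the correspondence between maximally commutative sets and lines, especially the two-sided maximality argument, together with the recognition that it genuinely relies on $\mathbb{Z}_p$ being a field. Primality is what equates the determinant condition with honest linear dependence; once $\mathbb{Z}_d$ has zero divisors this equivalence fails and the clean line structure is lost, which is consistent with the larger and less regular counts appearing in Examples \ref{ex:d=4} and \ref{ex:d=8}.
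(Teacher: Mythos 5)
Your proof is correct and follows essentially the same route as the paper: the paper justifies the proposition by asserting that the MCSs are exactly the lines $(0,1)\mathbb{Z}_p$, $(1,0)\mathbb{Z}_p$, and $(1,i)\mathbb{Z}_p$ for $1\leq i\leq p-1$, which is precisely your identification of MCSs with the $p+1$ one-dimensional subspaces of $\mathbb{F}_p^2$. You merely supply the linear-algebra details (determinant vanishing equals linear dependence over a field, plus the two-sided maximality argument) that the paper leaves implicit, and your emphasis on where primality is used is consistent with the paper's general count $\sigma(d)$ in Theorem \ref{thm:structure_MCS}, which reduces to $\sigma(p)=p+1$.
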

	In fact, these sets are characterized by $(0,1)\mathbb{Z}_p$, $(1,0)\mathbb{Z}_p$, and $(1,i)\mathbb{Z}_p$,  $1\leq i\leq p-1$  where $(a,b)\mathbb{Z}_p:=\{(ai,bi)| i\in\mathbb{Z}_p\}$. By this proposition and Theorem \ref{thm:maxcom}, one would   deduce Fan's result again. That is, if $\mathcal{S}$ is a set of $p$ dimensional GBSs with $l$ elements  and $l(l-1)/2\leq p$, then $\mathcal{S}$ is locally distinguishable. In fact, in this setting, the number of elements in $\Delta\mathcal{S}$ (which does not contain $(0,0)$) is less or equal than $l(l-1)/2$. However,  $(0,1)\mathbb{Z}_p\setminus\{(0,0)\}$, $(1,0)\mathbb{Z}_p\setminus\{(0,0)\}$, and $(1,i)\mathbb{Z}_p\setminus\{(0,0)\}$ ($1\leq i\leq p-1$) are $p+1$ classes of mutually disjoint sets. Therefore, there must exists some MCS $\mathcal{C}$ such that $\mathcal{C} \cap \Delta \mathcal{S}=\emptyset$. 
	
	\begin{lemma}\label{prop:sizelessthand}
		Each maximally commutative set   of GBSs in $\mathbb{C}^d\otimes\mathbb{C}^d $  must be	  with cardinality  less or equal than $d$.
	\end{lemma}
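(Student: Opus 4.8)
The plan is to pass from matrices to the additive group $\mathbb{Z}_d\times\mathbb{Z}_d$, on which commutativity is governed by the alternating form $B\big((s,t),(a,b)\big):=sb-ta\bmod d$; by the determinant characterization \eqref{eq:detdef}, two coordinates commute exactly when $B$ vanishes on them. The first step I would take is to reduce from an arbitrary commuting set to a commuting \emph{subgroup}. If $\mathcal{C}$ is mutually commutative, then $B(c,c')\equiv 0$ for all $c,c'\in\mathcal{C}$, and since $B$ is $\mathbb{Z}$-bilinear it vanishes on every pair of integer linear combinations of elements of $\mathcal{C}$. Hence the subgroup $H:=\langle\mathcal{C}\rangle\le\mathbb{Z}_d\times\mathbb{Z}_d$ is again self-commuting and contains $\mathcal{C}$. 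Because a maximally commutative set is in particular mutually commutative, it suffices to prove $|H|\le d$ for every self-commuting subgroup $H$; then $|\mathcal{C}|\le|H|\le d$.

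Next I would analyze such an $H$ through the projection $\pi_1\colon H\to\mathbb{Z}_d$, $(s,t)\mapsto s$. Its image $H_1:=\pi_1(H)$ and its kernel $H_0:=H\cap(\{0\}\times\mathbb{Z}_d)$ are subgroups of the cyclic group $\mathbb{Z}_d$, hence $H_1=\langle g_1\rangle$ and $H_0=\langle(0,g_0)\rangle$ for divisors $g_0,g_1$ of $d$, with $|H_1|=d/g_1$ and $|H_0|=d/g_0$. The first isomorphism theorem gives the exact sequence $0\to H_0\to H\to H_1\to 0$, so that $|H|=|H_0|\,|H_1|=d^2/(g_0g_1)$.

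The decisive step is then a single commutativity relation between two well-chosen generators. Since $g_1\in H_1$ there exists $(g_1,t_1)\in H$, and $(0,g_0)\in H$ by construction; as these two elements both lie in the self-commuting set $H$, they commute, so $B\big((g_1,t_1),(0,g_0)\big)=g_1g_0\equiv 0\pmod d$. Because $g_0,g_1\in\{1,\dots,d\}$, the product $g_0g_1$ is a nonzero multiple of $d$, whence $g_0g_1\ge d$. Substituting into $|H|=d^2/(g_0g_1)$ yields $|H|\le d$, and therefore $|\mathcal{C}|\le d$. I expect the only real care to be the bookkeeping in the second step (verifying $|H|=|H_0|\,|H_1|$ and that $g_0,g_1$ are genuine divisors of $d$) together with the realization that the entire order of $H$ is captured by these two cyclic factors, so that one relation $d\mid g_0g_1$ suffices; the composite-$d$ case causes no extra trouble here, unlike in arguments over a field. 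Finally, I would note in passing that for a \emph{maximally} commutative $\mathcal{C}$ the inequality is actually an equality: maximality forces $\mathcal{C}=H$ and $\mathcal{C}$ to coincide with its own commutant, which tightens $g_0g_1$ to exactly $d$ and gives $|\mathcal{C}|=d$, in agreement with Examples \ref{ex:d=3}--\ref{ex:d=8}.
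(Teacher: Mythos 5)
Your proof is correct, but it takes a genuinely different route from the paper's. The paper disposes of this lemma in one line on the operator side: a mutually commuting family of unitaries can be simultaneously diagonalized, so all elements of $\mathcal{C}$ lie in the $d$-dimensional space of matrices diagonal in a common eigenbasis, and since distinct generalized Pauli matrices are mutually orthogonal in the Hilbert--Schmidt inner product (${\rm Tr}\,[(X^mZ^n)^\dagger X^{m'}Z^{n'}]=d\,\delta_{mm'}\delta_{nn'}$), they are linearly independent, whence at most $d$ of them fit. You instead work entirely on the coordinate side, bounding the order of an isotropic subgroup of $\mathbb{Z}_d\times\mathbb{Z}_d$ for the alternating form $B$ of Eq.~\eqref{eq:detdef}: you pass to the generated subgroup $H$, split it via $0\to H_0\to H\to H_1\to 0$ with cyclic ends indexed by divisors $g_0,g_1$ of $d$, and extract $d\mid g_0g_1$ from the single commutation relation $B\bigl((g_1,t_1),(0,g_0)\bigr)\equiv 0$; your convention $g_0,g_1\in\{1,\dots,d\}$ correctly absorbs the degenerate cases $H_0$ or $H_1$ trivial, where the relation is vacuous but the divisibility holds anyway. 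What each approach buys: the paper's argument is shorter and applies verbatim to any commuting family of trace-orthogonal unitaries, not just GPMs; yours is more informative for this specific lattice, since your $H$ with generators $(g_1,t_1)$ and $(0,g_0)$ essentially anticipates the normal forms $\mathcal{C}_{i,j}$ of Theorem~\ref{thm:structure_MCS} and the closure properties of Lemma~\ref{prop:form}. Your closing aside that maximality forces $|\mathcal{C}|=d$ is also correct, though to make it airtight you should invoke the nondegeneracy of $B$ over $\mathbb{Z}_d$ (the Gram matrix is invertible, so $|H|\cdot|H^{\perp}|=d^2$ and maximality gives $H=H^{\perp}$) --- a fact the paper only establishes implicitly later, inside the proof of Theorem~\ref{thm:structure_MCS}.
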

	This can be obtained by observing that commutative set of unitary matrices can be simultaneously diagonalized and the elements of GBSs are mutually orthogonal.    Moreover, one could easily verify the following lemma.
	
	\begin{lemma}\label{prop:form}
		Let $\mathcal{C}$ be a maximally commutative set of GBSs in $\mathbb{C}^d\otimes \mathbb{C}^d$. If $(i,j)$ belongs to $ \mathcal{C},$ so 
		does $(ik,jk)$ where $k\in \mathbb{Z}_d,$ i.e., $(i,j)\mathbb{Z}_d\subseteq \mathcal{C}.$    Moreover, if both $(i_1,j_1)$ and $(i_2,j_2)$ belongs to $ \mathcal{C},$   so does $(i_1+i_2,j_1+j_2).$	\end{lemma}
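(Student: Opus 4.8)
The plan is to exploit the \emph{bilinearity} of the commutation condition \eqref{eq:detdef} together with the self-centralizing property forced by maximality. First I would record that, by the determinant criterion \eqref{eq:detdef}, two coordinates $(a,b)$ and $(x,y)$ commute precisely when the quantity $\omega\big((a,b),(x,y)\big):= ay-bx$ vanishes $\bmod\, d$, and that $\omega$ is $\mathbb{Z}_d$-bilinear and alternating in its two arguments. This single algebraic fact is what drives both closure statements, so no serious computation is needed beyond it.

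Next I would reformulate maximality as a self-centralizing identity. Write $\mathrm{Cen}(\mathcal{C}):=\{(s,t)\in\mathbb{Z}_d\times\mathbb{Z}_d \mid \omega((s,t),(x,y))\equiv 0 \bmod d \text{ for all }(x,y)\in\mathcal{C}\}$ for the set of all coordinates commuting with every element of $\mathcal{C}$. Since the elements of $\mathcal{C}$ mutually commute, each lies in this set, so $\mathcal{C}\subseteq \mathrm{Cen}(\mathcal{C})$; conversely, any element of $\mathrm{Cen}(\mathcal{C})$ commutes with all of $\mathcal{C}$, so by the maximality clause in the definition of MCS it cannot lie outside $\mathcal{C}$. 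Hence $\mathcal{C}=\mathrm{Cen}(\mathcal{C})$, i.e.\ $\mathcal{C}$ consists of \emph{exactly} those coordinates commuting with all of $\mathcal{C}$.

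With this reformulation the two claims are immediate. For the scalar-multiple claim I take $(i,j)\in\mathcal{C}$ and an arbitrary $(x,y)\in\mathcal{C}$; then for every $k\in\mathbb{Z}_d$ linearity in the first slot gives $\omega((ik,jk),(x,y))=k\,\omega((i,j),(x,y))\equiv 0$, so $(ik,jk)$ commutes with all of $\mathcal{C}$ and therefore $(ik,jk)\in\mathrm{Cen}(\mathcal{C})=\mathcal{C}$, proving $(i,j)\mathbb{Z}_d\subseteq\mathcal{C}$. For the additivity claim I take $(i_1,j_1),(i_2,j_2)\in\mathcal{C}$ and use additivity in the first slot, $\omega((i_1+i_2,j_1+j_2),(x,y))=\omega((i_1,j_1),(x,y))+\omega((i_2,j_2),(x,y))\equiv 0$ for every $(x,y)\in\mathcal{C}$, whence again $(i_1+i_2,j_1+j_2)\in\mathrm{Cen}(\mathcal{C})=\mathcal{C}$.

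I do not expect a genuine obstacle: the entire content is that the commutation form is bilinear and that maximality upgrades the inclusion ``$\mathcal{C}\subseteq\mathrm{Cen}(\mathcal{C})$'' to an equality. The only point that deserves a moment's care is that logical step identifying maximality with self-centralization—one must verify both inclusions rather than assert $\mathcal{C}=\mathrm{Cen}(\mathcal{C})$ outright—after which the closure properties follow from bilinearity without further work. As a structural bonus, these two facts say precisely that every MCS is a $\mathbb{Z}_d$-submodule of $\mathbb{Z}_d\times\mathbb{Z}_d$, which is exactly the regularity visible in Examples \ref{ex:d=3}--\ref{ex:d=8}.
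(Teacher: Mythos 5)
Your proof is correct. The paper gives no argument for this lemma at all (it is stated as something ``one could easily verify''), and your route---noting that the commutation form $\omega\big((a,b),(x,y)\big)=ay-bx$ is $\mathbb{Z}_d$-bilinear and that maximality upgrades the inclusion $\mathcal{C}\subseteq\mathrm{Cen}(\mathcal{C})$ to the equality $\mathcal{C}=\mathrm{Cen}(\mathcal{C})$, from which both closure properties follow by linearity in the first slot---is precisely the routine verification the paper implicitly intends.
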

	From this lemma, one can conclude that each maximally commutative set  $\mathcal{C}$ can be written as the forms
	\begin{equation}\label{eq:decom_sum}
	\mathcal{C}=\bigcup_{k=1}^n (i_k,j_k)\mathbb{Z}_d,\   \text{ or } \
		\mathcal{C}=\sum_{k=1}^n (i_k,j_k)\mathbb{Z}_d.
	\end{equation}  
	Here $A+B:=\{a+b\big|\  a\in A, b\in B\}$ where $A,B$ are subsets of a group.
	
We find that the number of MCSs of GBSs is related to is an interesting function in number theory which is known as sigma function. The sigma function is usually  denoted by the Greek letter sigma ($\sigma$). This function actually denotes the sum of all divisors of a positive integer. For examples, $\sigma(6)=1+2+3+6=12,$  and  $\sigma(16)=1+2+4+8+16=31.$ Generally, let $d=p_1^{n_1}p_2^{n_2}\cdots p_l^{n_l},$ then 
	$$ \sigma(d)=\prod_{k=1}^l(1+p_k+\cdots+p_k^{n_k}). $$
	\begin{theorem}[Structure Characterization of MCS]\label{thm:structure_MCS}
		Let $d\geq 3$ be an integer. For each pair $(i,j)$ in  $\mathbb{Z}_d\times\mathbb{Z}_d$ where $i\neq 0,$ we define the following set
		$$\mathcal{C}_{i,j}:=\{(x,y)\in\mathbb{Z}_d\times\mathbb{Z}_d\big | \ \  \left|\begin{array}{cc}
			i&j\\
			x&y
		\end{array}\right|\equiv 0 \mod d,  x\in i\mathbb{Z}_d\}. 
		$$ Then $\mathcal{C}_{i,j}$ is a MCS of GBSs in $\mathbb{C}^d\otimes\mathbb{C}^d$ with exactly $d$ elements. Moreover, if we define $\mathcal{C}_{0,0}:=\{ (0,y)|y\in \mathbb{Z}_d\}$, then every MCS of GBSs in $\mathbb{C}^d\otimes\mathbb{C}^d$ must be one of $\mathcal{C}_{i,j}$  with  $i\neq 0$ or $\mathcal{C}_{0,0}.$ There are  exactly $\sigma({d})$ classes of MCSs which can be listed as follows
		$$ \mathcal{MCS}_d:=\{\mathcal{C}_{i,j}| d=ik, 0\leq j\leq k-1 \}\cup\{\mathcal{C}_{0,0}\}.$$
	\end{theorem}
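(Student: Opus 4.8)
The plan is to read an MCS as a maximal mutually-commuting subset of $\mathbb{Z}_d\times\mathbb{Z}_d$ with respect to the determinant pairing \eqref{eq:detdef}, and to exploit the two structural facts already available: by Lemma~\ref{prop:form} any MCS is closed under addition and under the scaling $(x,y)\mapsto(kx,ky)$, hence is a submodule of $\mathbb{Z}_d\times\mathbb{Z}_d$; and by Lemma~\ref{prop:sizelessthand} it has at most $d$ elements. I would organize the argument into three parts: (i) each $\mathcal{C}_{i,j}$ with $i\neq 0$ really is an MCS with exactly $d$ elements; (ii) conversely every MCS coincides with some $\mathcal{C}_{i,j}$ or with $\mathcal{C}_{0,0}$; (iii) count the distinct sets so obtained and match the total to $\sigma(d)$.

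For part (i), I would first count $\mathcal{C}_{i,j}$. Writing $g=\gcd(i,d)$, the constraint $x\in i\mathbb{Z}_d$ restricts $x$ to the $d/g$ multiples of $g$, and for each such $x$ the congruence $iy\equiv jx \pmod d$ is solvable (because $g\mid x$ forces $g\mid jx$) with exactly $g$ solutions $y$; hence $|\mathcal{C}_{i,j}|=(d/g)\cdot g=d$. The essential point is that $\mathcal{C}_{i,j}$ is \emph{mutually} commutative, and this is exactly where the side condition $x\in i\mathbb{Z}_d$ is used: given $(x_1,y_1),(x_2,y_2)\in\mathcal{C}_{i,j}$, write $x_a=it_a$ and use $iy_a\equiv jit_a$ to compute
\[
x_1y_2-x_2y_1=t_1(iy_2)-t_2(iy_1)\equiv jit_1t_2-jit_1t_2=0 \pmod d,
\]
so any two elements commute. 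Since $\mathcal{C}_{i,j}$ is commutative of size $d$, Lemma~\ref{prop:sizelessthand} forbids enlarging it; therefore it is maximal, i.e.\ an MCS.

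For part (ii), let $\mathcal{C}$ be any MCS. By Lemma~\ref{prop:form} it is a submodule, so the projection of $\mathcal{C}$ onto the first coordinate is a subgroup $i\mathbb{Z}_d$ of $\mathbb{Z}_d$ with $i\mid d$. If $i=0$, every element has first coordinate $0$, so $\mathcal{C}\subseteq\mathcal{C}_{0,0}$; since $\mathcal{C}_{0,0}$ is itself commutative, maximality gives $\mathcal{C}=\mathcal{C}_{0,0}$. Otherwise pick $(i,j)\in\mathcal{C}$ realizing the generator $i$. Every $(x,y)\in\mathcal{C}$ commutes with $(i,j)$, i.e.\ $iy-jx\equiv0\pmod d$, and $x\in i\mathbb{Z}_d$; thus $(x,y)\in\mathcal{C}_{i,j}$ and $\mathcal{C}\subseteq\mathcal{C}_{i,j}$. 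As $\mathcal{C}_{i,j}$ is commutative and $\mathcal{C}$ is maximal, no element of $\mathcal{C}_{i,j}$ can lie outside $\mathcal{C}$, forcing $\mathcal{C}=\mathcal{C}_{i,j}$.

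Finally, for part (iii) I would show the sets are governed only by a residue of $j$. With $k=d/i$, for $x=it$ the defining congruence reduces to $y\equiv jt\pmod k$, so $\mathcal{C}_{i,j}$ depends on $j$ only modulo $k$, and distinct residues give distinct sets (tested on the element $(i,j)$). A short computation with $g=\gcd(i,d)$ also shows that a general $\mathcal{C}_{i,j}$ equals some $\mathcal{C}_{g,j'}$ with $g\mid d$, so the divisor-indexed family already exhausts every MCS. Counting then gives, for each divisor $i$ of $d$ with $0<i<d$, exactly $k=d/i$ sets, plus the single $\mathcal{C}_{0,0}$, for a total
\[
1+\sum_{i\mid d,\ i<d}\frac{d}{i}=1+\sum_{k\mid d,\ k>1}k=\sigma(d).
\]
The main obstacle is part (i): isolating why the extra membership condition $x\in i\mathbb{Z}_d$ (rather than merely commuting with $(i,j)$) is precisely what makes $\mathcal{C}_{i,j}$ internally commutative, and simultaneously pinning its cardinality at $d$ so that Lemma~\ref{prop:sizelessthand} upgrades ``commutative'' to ``maximal''; the classification and the $\sigma(d)$ bookkeeping are then comparatively routine, apart from care at the boundary case $i=d\equiv 0$, which is absorbed into $\mathcal{C}_{0,0}$.
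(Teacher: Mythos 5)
Your proposal is correct and follows essentially the same route as the paper's proof: counting $|\mathcal{C}_{i,j}|=d$ via $\gcd(i,d)$, deriving internal commutativity from the side condition $x\in i\mathbb{Z}_d$, upgrading commutativity of a size-$d$ set to maximality via Lemma~\ref{prop:sizelessthand}, classifying an arbitrary MCS through the gcd/generator of its first coordinates using the closure properties of Lemma~\ref{prop:form} (your projection-subgroup phrasing is the paper's B\'ezout argument in disguise), and then doing the same $\sigma(d)$ bookkeeping. The only small omission is distinctness across different divisors, i.e.\ $\mathcal{C}_{i_1,j_1}\neq\mathcal{C}_{i_2,j_2}$ when $i_1\neq i_2$ both divide $d$, which the paper checks explicitly and which follows at once since the first-coordinate projection of $\mathcal{C}_{i,j}$ is $i\mathbb{Z}_d$ and so recovers the divisor $i$.
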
	
	\begin{proof}
		First, we show that the cardinality of each $\mathcal{C}_{i,j}$ is equal to $d$. Denote $d_i$ as the greatest common divisor of $i$ and $d$. Then the set $i\mathbb{Z}_d:=\{i j\in \mathbb{Z}_d \ | \ j\in \mathbb{Z}_d\}$	 has exactly $d/d_i$ elements. More exactly,
		$$i\mathbb{Z}_d=\{ ik \big| \ \ k=0,1,\cdots,\frac{d}{d_i}-1\}.$$
		For each $x=ik$ ($k=0,1,\cdots,\frac{d}{d_i}-1$), there are exactly $d_i$ solutions of $y\in\mathbb{Z}_d$ that satisfies
		\begin{equation}\label{eq:det}		       \left|\begin{array}{cc}
				i&j\\
				x&y
			\end{array}\right|\equiv 0 \mod d.
		\end{equation} 
		In fact, the Eq. \eqref{eq:det} is equivalent to 
		$i(y-kj)\equiv 0 \mod d$ whose solutions can be expressed    analytically as  $y=kj+\frac{d}{d_i}l$ where $  l=0,1,\cdots d_i-1.$  Therefore, the set $\mathcal{C}_{i,j}$ can be expressed as 
		$$
		\{(ik,kj+\frac{d}{d_i}l) \big |\  k=0,1,\cdots,\frac{d}{d_i}; l=0,1,\cdots d_i-1 \}. 
		$$ One can check that for two different pairs of $(k_1,l_1)$ and  $(k_2,l_2)$ with the above conditions, the coordinates  $(ik_1,k_1j+\frac{d}{d_i}l_1) \neq  (ik_2,k_2j+\frac{d}{d_i}l_2).$ Therefore, the cardinality of $\mathcal{C}_{i,j}$ is equal to $d.$
		
		Now we show that the elements in $\mathcal{C}_{i,j}$ are mutually commutative. In fact, for any two solutions $(ik_1,k_1j+\frac{d}{d_i}l_1) $ and $ (ik_2,k_2j+\frac{d}{d_i}l_2)$,  we  have 
		\begin{equation}\label{eq:det2}	 \left|\begin{array}{cc}
				ik_1&k_1j+\frac{d}{d_i}l_1\\[1mm]
				ik_2&k_2j+\frac{d}{d_i}l_2
			\end{array}\right|= \frac{i}{d_i}(k_1l_2-k_2l_1)d
		\end{equation}
		which is always equal to $0 \mod d$ as $d_i$ divides $i$. 
		
		Therefore, each $\mathcal{C}_{i,j}$ is a commutative set of GBSs with cardinality $d$. By Lemma \ref{prop:sizelessthand}, each  $\mathcal{C}_{i,j}$ must also be  maximally.  
		
		Next, we show that for every MCS $\mathcal{C}$,  it must be one of $\mathcal{C}_{i,j}$ with $i\neq 0$ or $C_{0,0}.$ For any maximally commutative set  $\mathcal{C}$ of GBSs in $\mathbb{C}^d\otimes \mathbb{C}^d$, by Eq. \eqref{eq:decom_sum}, there exists $(x_k,y_k)\in \mathcal{C}$ ($k=1,\cdots,n$), such that 
		$$	 \mathcal{C}=\sum_{k=1}^n (x_k,y_k)\mathbb{Z}_d.$$
		If all $x_k$ are equal to 0, one must conclude that $\mathcal{C}=\mathcal{C}_{0,0}.$ If not, let $i$ denote the greatest common divisor of $x_1,x_2,\cdots,x_n$ and $d$, which is not equal to zero in this case. There exist $r_k\in\mathbb{Z}_d$, such that $
		i=\sum_{k=1}^nr_kx_k$ (by Ref. \cite{Nathanson2000}, p12, Theorem 1.4, we have  $i=R_0d+\sum_{k=1}^nR_kx_k,  R_i\in \mathbb{Z}$, then taking module $d$). And we define $j=\sum_{k=1}^nr_ky_k.$ By Lemma \ref{prop:form}, we have $(i,j)\in \mathcal{C}.$ As both $(x_k,y_k)$ and $(i,j) $ are in $\mathcal{C}.$  By the definition of $i$, for each $k$, the element $x_k\in i\mathbb{Z}_d$. As both $(x_k,y_k)$ and $(i,j) $ are in $\mathcal{C}$, we have $iy_k-jx_k\equiv 0 \mod d.$ Therefore, by definition of $\mathcal{C}_{i,j}$, for each $k$, the element $(x_k,y_k)\in \mathcal{C}_{i,j}.$ By Lemma \ref{prop:form} again, one have $\mathcal{C}\subseteq \mathcal{C}_{i,j}.$ However, both sets are maximally commutative sets of GBSs. Therefore, $\mathcal{C}$ must equal to $\mathcal{C}_{i,j}$.
		
		In the following, we show that each $\mathcal{C}_{x,y}$ ($x\neq0$) is in fact lie in one of $\mathcal{MCS}_d$. Set $d_x$ denote the greatest common divisor of $x$ and $d$ (we might assume $x=c_xd_x $ where $c_x\in \mathbb{Z}$). So $d_x=qx+rd$ for some integers $q,r$.   There exist a unique  $j\in\{0,1,\cdots,k_x-1\}$ (where $k_xd_x=d$) such that
		$$qy-j\in k_x\mathbb{Z}_d.$$
		That is, $qy-j=k_xl_x$ for some $l_x\in\mathbb{Z}_d.$ For this $j$, we have  the following equation
		$$
		\left|\begin{array}{cc}
			d_x&j\\
			x&y
		\end{array}\right|=
		\left|\begin{array}{cc}
			d_x&qy-k_xl_x\\
	x	&y
		\end{array}\right|=(d_x-qx)y+l_xk_x x
		$$
		which is equal $(ry+l_xc_x)d \equiv 0$   under   $\mod  d.$ By definition, $x\in d_x\mathbb{Z}_d$. Therefore, we have $(x,y)\in \mathcal{C}_{d_x,j}.$   As the elements in  $\mathcal{C}_{d_x,j}$ are commute with each other, for any $(x_1,y_1)\in \mathcal{C}_{d_x,j}$,  we have 
		$$
		\left|\begin{array}{cc}
			x&y\\
			x_1&y_1
		\end{array}\right|\equiv 0 \mod d.$$ Note that $d_x\mathbb{Z}_d=x\mathbb{Z}_d$, we have $x_1\in x\mathbb{Z}_d.$  Therefore, $\mathcal{C}_{d_x,j}\subseteq \mathcal{C}_{x,y}.$ By the maximality, we must have $ \mathcal{C}_{x,y}=\mathcal{C}_{d_x,j}.$ Therefore, one conclude that every $\mathcal{C}_{x,y}$ must be one of the elements in $\mathcal{MCS}_d.$
		
		On the other hand, we need to show that the sets in $\mathcal{MCS}_d$ are mutually different. Clearly, $\mathcal{C}_{0,0}$ is different from all the other sets. Let $\mathcal{C}_{i_1,j_1}$ and  $\mathcal{C}_{i_2,j_2}$ be any two members of  $\mathcal{MCS}_d$ where $(i_1,j_1)\neq (i_2,j_2)$ and $i_1,i_2$ are nonzero. As $d=i_1k_1,$ if $i_1=i_2,$  we have $i_1j_2-i_2j_1=i_1 (j_2- j_1)$ which lies   between $-(d-1)$ and $d-1$  but not equal to zero. Hence, $(i_1,j_1)$ and $(i_2,j_2)$ are not commute. Therefore,  $\mathcal{C}_{i_1,j_1}\neq \mathcal{C}_{i_2,j_2}.$ If $i_1\neq i_2$, we can assume that $i_1<i_2$ without loss of generality. As both $i_1$ and $i_2$ are divisors of $D$, one can check that $i_1\notin i_2\mathbb{Z}_d$.	Therefore, by definition, $(i_1,j_1)\notin\mathcal{C}_{i_2,j_2}.$ Hence, we also have 	   $\mathcal{C}_{i_1,j_1}\neq \mathcal{C}_{i_2,j_2}.$
		
		For each divisor $i (1\leq i<d)$ of $d$, it contributes to $d/i$ classes of MCSs to $\mathcal{MCS}_d.$  Therefore,
		$$|\mathcal{MCS}_d|=1+\sum_{i|d, 1\leq i<d} d/i=\sum_{i|d} d/i=\sigma(d).$$
		This completes the proof.
	\end{proof}

	\begin{figure}[h]
		\centering
		\includegraphics[scale=0.59]{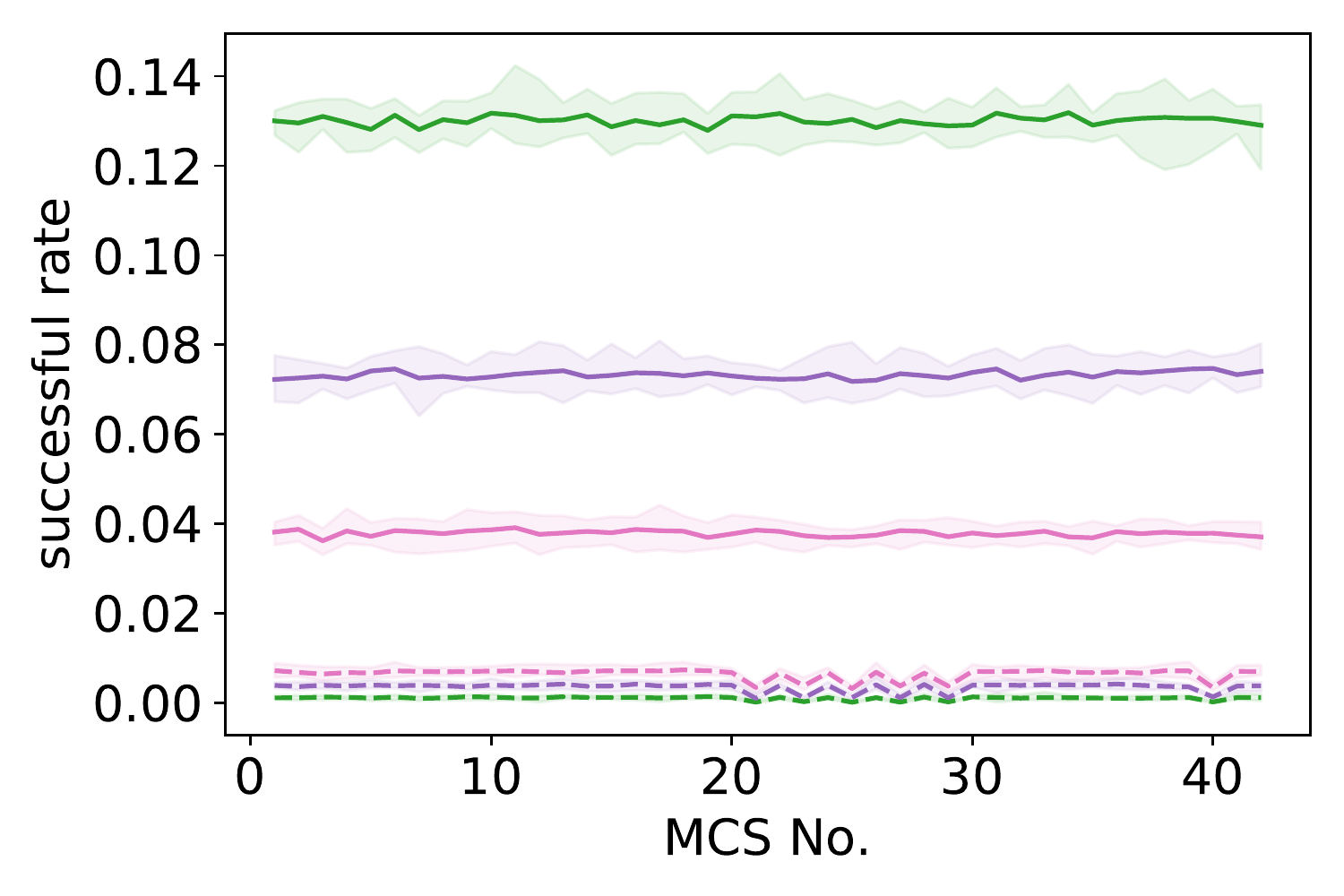}
		\caption{We consider the cases $d=20$ and $n=9,10$ or $11$ (which correspond to curves with color green, purple and pink respectively).  We randomly generate 100000 sets (which are separated into 10 equal classes) of GBSs whose cardinality are all $n=9,10,11$  respectively.  Each solid curve shows the successful rates for the $\sigma(20)=42$ classes of MCSs. The dot lines represent the successful rates for the $42$ classes of MCSs such that  the local distinguishability of the samples can only detected by one class of MCS itself.    }\label{fig:MCSSuccessp}
	\end{figure}

From the above theorem, we know that there are $\sigma(d)$ classes of MCSs  of $d$ dimensional 
GBSs.  	 Are there any differences in the ability of these MCSs to detect generalized Bell state sets?   
Is there any redundant MCS in detecting the local discrimination of generalized Bell sets? We present some numerical results for the two questions.

	The three solid lines in Fig. \ref{fig:MCSSuccessp} imply that the successful rates of all MCSs  are almost equal to each other (one should compare this with the detectors defined in Eq. \eqref{PointDe}, see Fig. \ref{fig:PointSucess}). The three dot lines imply that each class of MCSs is irredudant in the sense that  for each MCS $\mathcal{C}$, there exist some set $\mathcal{S}$ whose local distinguishability can only be detected by  $\mathcal{C}$ but not by other MCSs.
	
		\begin{figure}[h]
		\centering
		\includegraphics[scale=0.59]{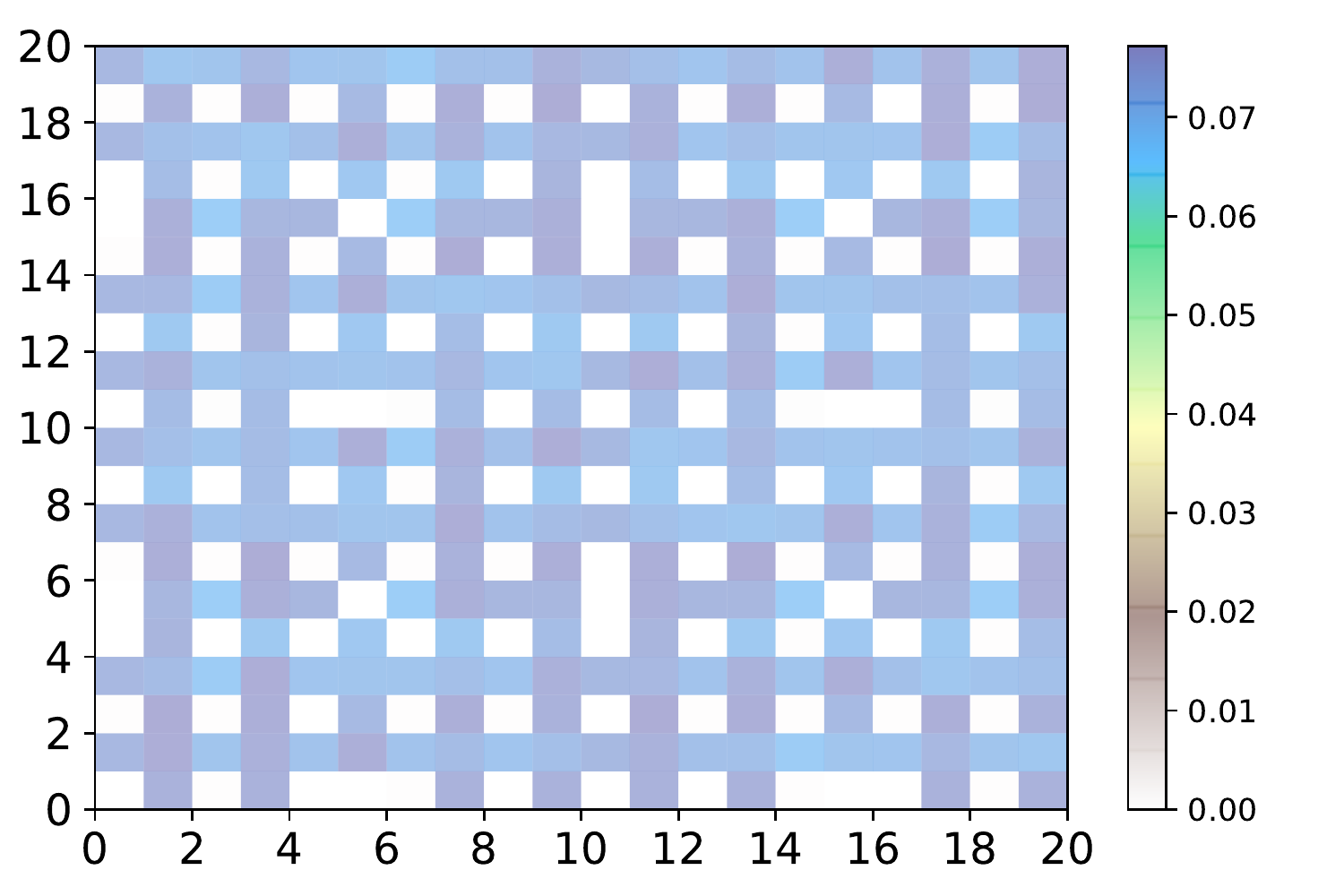}
		\caption{We consider the case $d=20$ and $n=10$. The figure shows the successful rates for each detectors (see Eq. \eqref{PointDe}) indicated by the coordinates $(i,j)\in\mathbb{Z}_{20}\times \mathbb{Z}_{20}$. We randomly generate 100000 sets   of GBSs whose cardinality are all $n=10$.    }\label{fig:PointSucess}
	\end{figure}
	\section{Conclusion and discussion}\label{Sec:Con_Dis}
	
	In this paper, we studied the problem of local distinguishability of   generalized Bell states.  Firstly, we gave a review of some important methods for detecting the local distinguishability of   GBSs. Motivated by a recent method derived by Yuan et. al. \cite{Yuan21}, we introduced the concept maximally commutative set of GBSs. Surprisingly, we found that each MCS is useful for detecting the local distinguishability of GBSs. More exactly, given a set $\mathcal{S}$ of  GBSs, if there exists some MCS $\mathcal{C}$ such that  the difference set  of  $\mathcal{S}$ is disjoint with or contains in $\mathcal{C},$ then the set $\mathcal{S}$ can be one-way distinguishable.   This method is stronger than that in Ref. \cite{Yuan21}.  This motivates us to find out all the MCSs of given dimension.  Indeed, we presented a complete structure characterization of MCS in Theorem \ref{thm:structure_MCS}. 
	
	However, MCS only gives a sufficient condition for locally distinguishable, it is not necessary. It is interesting to derive an easy checking condition for local distinguishability of GBSs which are both sufficient and necessary. In addition, it is interesting to check whether Fan's results can be extended to systems without the assumption on the dimension of local systems. A weaker form is that: given any integer $l$, do there exist some $D$ (which depends on $l$) such that if $d\geq D$, then any $l$ GBSs in $\mathbb{C}^d\otimes \mathbb{C}^d$ are locally distinguishable? As far as we known, this problem is only solved for the case $l=3$. We conjecture that this holds for all other cases. 
	
	Note that our method here can be generalized to any maximally entangled basis whose defining unitary matices $\mathcal{B}$   satisfies: for any $U,V\in \mathcal{B}$ there exists some $W\in \mathcal{B}$ such that $U^\dagger V\propto W$.   The lattice qudits basis \cite{Tian15} is such an example.  From their proof, any locally distinguishable  set of lattices qudits basis  that can be detected by Ref. \cite{Tian15}   can be always detected by a MCS of lattice qudits basis.  Therefore, our method can be also seen as a generalization of theirs. Therefore, it is also interesting to give a complete characterization of the MCS of lattice qudits basis and study its application to local discrimination.

	\begin{acknowledgments}
	M.S.L. and Y.L.W. were supported  by  National  Natural  Science  Foundation  of  China under Grant No. (12005092, 11901084, 61773119),  the China Postdoctoral Science Foundation (2020M681996),   the Research startup funds of DGUT (GC300501-103). F.S. was supported by the NSFC under Grants No. 11771419 and No.12171452, the Anhui Initiative in Quantum Information Technologies under Grant No. AHY150200, and the National Key Research and Development Program of China 2020YFA0713100.
	\end{acknowledgments}


\begin{thebibliography}{}
		
		
		\bibitem{nils04} M. A. Nielsen and I. L. Chuang.  \emph{Quantum Computation and Quantum Information}. Cambridge University Press, 2004.
		
		
		
		\bibitem{Bennett99}
	C. H. Bennett, D. P. DiVincenzo, C. A. Fuchs, T. Mor, E. Rains, P. W. Shor, J. A. Smolin, and W. K. Wootters, Quantum nonlocality without entanglement, \href{https://doi.org/10.1103/PhysRevA.59.1070}{Phys. Rev. A {\bf 59}, 1070(1999)}.
	
	\bibitem{Walgate-2000} J. Walgate, A. J. Short, L. Hardy, and V.
			Vedral, Local distinguishability of multipartite orthogonal quantum
			states, \href{https://doi.org/10.1103/PhysRevLett.85.4972}{Phys. Rev. Lett. {\bf85}, 4972 (2000)}.
				
	 
			
	
		\bibitem{Terhal01} B. M. Terhal, D. P. DiVincenzo, and D. W. Leung, Hiding Bits
		in Bell States, \href{https://doi.org/10.1103/PhysRevLett.86.5807}{Phys. Rev. Lett. \textbf{86}, 5807 (2001).}
		
		
 
		
		\bibitem{DiVincenzo02}
		D. P. DiVincenzo, D.W.  Leung and B.M. Terhal, Quantum data hiding,
		\href{https://ieeexplore.ieee.org/document/985948/}{IEEE Trans. Inf. Theory  \textbf{48}, 580 (2002)}.
		
		\bibitem{Markham08}
		D. Markham  and B. C. Sanders, Graph States for Quantum Secret Sharing,
		\href{https://doi.org/10.1103/PhysRevA.78.042309}{Phys. Rev. A  \textbf{78}, 042309 (2008)}.
		
		
		
		\bibitem{Rahaman15}  R. Rahaman and M. G. Parker, Quantum scheme for secret
		sharing based on local distinguishability, \href{https://doi.org/10.1103/PhysRevA.91.022330}{Phys. Rev. A \textbf{91},
			022330 (2015).}
		
		\bibitem{WangJ17} J. Wang, L. Li, H. Peng, and Y. Yang, Quantum-secret-sharing
		scheme based on local distinguishability of orthogonal multiqudit entangled states, \href{https://doi.org/10.1103/PhysRevA.95.022320}{Phys. Rev. A \textbf{95}, 022320 (2017).}
		
 
			
			\bibitem{ben99u} C. H. Bennett, D. P. DiVincenzo, T. Mor, P. W. Shor,
			J. A. Smolin, and B. M. Terhal, Unextendible Product Bases and Bound
			Entanglement, \href{https://doi.org/10.1103/PhysRevLett.82.5385}{ Phys. Rev. Lett. {\bf82}, 5385 (1999)}.
			
			
			
			
			\bibitem{Ghosh-2001} S. Ghosh, G. Kar, A. Roy, A. Sen (De), and U.
			Sen, Distinguishability of Bell states, \href{https://doi.org/10.1103/PhysRevLett.87.277902}{Phys. Rev. Lett. {\bf87}, 277902 (2001)}.
			
			 
			\bibitem{walgate-2002} J. Walgate and L. Hardy, Nonlocality, asymmetry,
			and distinguishing bipartite states, \href{https://doi.org/10.1103/PhysRevLett.89.147901}{Phys. Rev. Lett. {\bf89}, 147901 (2002)}.
			
			\bibitem{HSSH} M. Horodecki, A. Sen(De), U. Sen, and K. Horodecki,
			Local indistinguishability: more nonlocality with less entanglement,
			\href{https://doi.org/10.1103/PhysRevLett.90.047902}{Phys. Rev. Lett. {\bf90}, 047902 (2003)}.
			
			\bibitem{divin03} D. P. DiVincenzo, T. Mor, P. W. Shor, J. A. Smolin,
			and B. M. Terhal, Unextendible product bases, uncompletable product
			bases and bound entanglement, \href{https://doi.org/10.1007/s00220-003-0877-6}{Commun. Math. Phys. {\bf238}, 379 (2003)}.
			
			\bibitem{rin04} S. De Rinaldis, Distinguishability of complete and
			unextendible product bases, \href{https://doi.org/10.1103/PhysRevA.70.022309}{Phys. Rev. A {\bf70}, 022309 (2004)}.
			
			\bibitem{Ghosh-2004} S. Ghosh, G. Kar, A. Roy, and D. Sarkar, Distinguishability
			of maximally entangled states, \href{https://doi.org/10.1103/PhysRevA.70.022304}{Phys. Rev. A {\bf70}, 022304 (2004)}. 
			
			\bibitem{fan-2005} H. Fan, Distinguishability and indistinguishability
			by local operations and classical communication, \href{https://doi.org/10.1103/PhysRevLett.92.177905}{Phys. Rev. Lett. {\bf92}, 177905 (2004)}.
			
			\bibitem{Nathanson-2005} M. Nathanson, Distinguishing bipartite orthogonal
			states by LOCC: best and worst cases, \href{https://doi.org/10.1063/1.1914731}{Journal of Mathematical Physics {\bf46}, 062103 (2005)}.
			
			\bibitem{Watrous-2005} J. Watrous, Bipartite subspaces having no
			bases distinguishable by local operations and classical communication,
			\href{https://doi.org/10.1103/PhysRevLett.95.080505}{Phys. Rev. Lett. {\bf95}, 080505 (2005)}.
			
			\bibitem{Hayashi-etal-2006} M. Hayashi, D. Markham, M. Murao, M.
			Owari, and S. Virmani, Bounds on entangled orthogonal state discrimination
			using local operations and classical communication, \href{https://doi.org/10.1103/PhysRevLett.96.040501}{Phys. Rev. Lett. {\bf96}, 040501 (2006)}.
			
			 
			
			\bibitem{nis06} J. Niset and N. J. Cerf, Multipartite nonlocality
			without entanglement in many dimensions, \href{https://doi.org/10.1103/PhysRevA.74.052103}{Phys. Rev. A {\bf74}, 052103 (2006)}.
			
			\bibitem{Duan2007} R. Y. Duan, Y. Feng, Z. F. Ji, and M. S. Ying,
			Distinguishing arbitrary multipartite basis unambiguously using local
			operations and classical communication, \href{https://doi.org/10.1103/PhysRevLett.98.230502}{Phys. Rev. Lett. {\bf98}, 230502 (2007)}.
			
			\bibitem{feng09} Y. Feng and Y.-Y. Shi, Characterizing locally indistinguishable
			orthogonal product states, \href{https://doi.org/10.1109/TIT.2009.2018330}{IEEE Trans. Inf. Theory {\bf55}, 2799 (2009)}.
			
			\bibitem{Duan-2009} R. Y. Duan, Y. Feng, Y. Xin, and M. S. Ying,
			Distinguishability of quantum states by separable operations, \href{https://doi.org/10.1109/TIT.2008.2011524}{IEEE Trans. Inform. Theory {\bf55}, 1320 (2009)}. 
			
			
			
			\bibitem{BGK-2011} S. Bandyopadhyay, S. Ghosh and G. Kar, LOCC distinguishability
			of unilaterally transformable quantum states, \href{https://doi.org/10.1088/1367-2630/13/12/123013}{New J. Phys. {\bf13}, 123013 (2011)}.
			
			\bibitem{Bandyo-2011} S. Bandyopadhyay, More nonlocality with less
			purity, \href{https://doi.org/10.1103/PhysRevLett.106.210402}{Phys. Rev. Lett. {\bf106}, 210402 (2011)}.
			
			\bibitem{Yu-Duan-2012} N. Yu, R. Duan, and M. Ying, Four Locally
			Indistinguishable Ququad-Ququad Orthogonal Maximally Entangled States,
			\href{https://doi.org/10.1103/PhysRevLett.109.020506}{Phys. Rev. Lett. {\bf109}, 020506 (2012)}.
			
			\bibitem{Cosentino-2013} A. Cosentino, Positive partial transpose indistinguishable
			states via semidefinite programming, \href{https://doi.org/10.1103/PhysRevA.87.012321}{Phys. Rev. A {\bf87}, 012321 (2013)}.
			
			
			\bibitem{Nathanson13}M. Nathanson, Three maximally entangled states can require two-way local operations and
classical communication for local discrimination, \href{https://journals.aps.org/pra/abstract/10.1103/PhysRevA.88.062316}{Phys. Rev. A \textbf{88}, 062316 (2013).}
 
			
			\bibitem{Li15} M.-S. Li, Y.-L. Wang, S.-M. Fei and Z.-J. Zheng, $d$ locally indistinguishable maximally entangled states in $\mathbb{C}^d\otimes\mathbb{C}^d$, \href{https://doi.org/10.1103/PhysRevA.91.042318}{Phys.
				Rev. A \textbf{91}, 042318 (2015)}.		
			
			
			\bibitem{Yus15}  S. X. Yu and C. H. Oh, Detecting the local indistinguishability of maximally entangled states, \href{http://arxiv.org/abs/arXiv:1502.01274v1}{arXiv:1502.01274v1}.
			
			
			
			\bibitem{Li20}M.-S. Li, S.-M. Fei, Z.-X. Xiong,  and Y.-L. Wang, Twist-teleportation-based local discrimination of maximally entangled states,  \href{https://link.springer.com/article/10.1007\%2Fs11433-020-1562-4}{SCIENCE CHINA Physics, Mechnics $\&$  Astronomy \textbf{63} 8,  280312 (2020).}
	
			
			\bibitem{BN-2013} S. Bandyopadhyay, M. Nathanson, Tight bounds on
			the distinguishability of quantum states under separable measurements,
			\href{https://doi.org/10.1103/PhysRevA.88.052313}{Phys. Rev. A {\bf88}, 052313 (2013)}.
			
			\bibitem{Cosentino-Russo-2014} A. Cosentino and V. Russo, Small sets
			of locally indistinguishable orthogonal maximally entangled states,
			\href{https://doi.org/10.26421/QIC14.13-14}{Quantum Information and Computation {\bf14}, 1098 (2014)}.
			
			\bibitem{B-IQC-2015} S. Bandyopadhyay, A. Cosentino, N. Johnston,
			V. Russo, J. Watrous, and N. Yu, Limitations on separable measurements
			by convex optimization, \href{https://doi.org/10.1109/TIT.2015.2417755}{IEEE Transactions on Information Theory, {\bf61},  3593 (2015)}.. 
			
			
			
			
			\bibitem{childs13} A. M. Childs, D. Leung, L. Man\v{c}inska, and
			M. Ozols, A framework for bounding nonlocality of state discrimination,
			\href{https://doi.org/10.1007/s00220-013-1784-0}{Commun. Math. Phys. {\bf323}, 1121 (2013)}. 
			
			\bibitem{Yang13} Y.-H. Yang, F. Gao, G.-J. Tian, T.-Q. Cao, and Q.-Y.
			Wen, Local distinguishability of orthogonal quantum states in a $2\otimes2\otimes2$
			system, \href{https://doi.org/10.1103/PhysRevA.88.024301}{Phys. Rev. A {\bf88}, 024301 (2013)}. 
			
			\bibitem{zhang14} Z.-C. Zhang, F. Gao, G.-J. Tian, T.-Q. Cao, and
			Q.-Y. Wen, Nonlocality of orthogonal product basis quantum states,
			\href{https://doi.org/10.1103/PhysRevA.90.022313}{Phys. Rev. A {\bf90}, 022313 (2014)}. 
			
			\bibitem{zhang15} Z.-C. Zhang, F. Gao, S.-J. Qin, Y.-H. Yang, and
			Q.-Y. Wen, Nonlocality of orthogonal product states, \href{https://doi.org/10.1103/PhysRevA.92.012332}{Phys. Rev. A {\bf92}, 012332 (2015)}.
			
			\bibitem{wang15} Y.-L. Wang, M.-S. Li, Z.-J. Zheng, and S.-M. Fei,
			Nonlocality of orthogonal product-basis quantum states, \href{https://doi.org/10.1103/PhysRevA.92.032313}{Phys. Rev. A {\bf92}, 032313 (2015)}.
			
			 
			
		 
			
			\bibitem{Xu-16-2} G.-B. Xu, Q.-Y. Wen, S.-J. Qin, Y.-H. Yang, and
			F. Gao, Quantum nonlocality of multipartite orthogonal product states,
			\href{https://doi.org/10.1103/PhysRevA.93.032341}{Phys. Rev. A {\bf93}, 032341 (2016)}.
			
			\bibitem{zhang16} Z.-C. Zhang, F. Gao, Y. Cao, S.-J. Qin, and Q.-Y.
			Wen, Local indistinguishability of orthogonal product states, \href{https://doi.org/10.1103/PhysRevA.93.012314}{Phys. Rev. A {\bf93}, 012314 (2016)}.
			
			\bibitem{zhang16-1} X. Zhang, X. Tan, J. Weng, and Y. Li, LOCC indistinguishable
			orthogonal product quantum states, \href{https://doi.org/10.1038/srep28864}{Sci. Rep. {\bf6}, 28864 (2016)}.
			
			
			\bibitem{Wang-2017-Qinfoprocess} Y.-L. Wang, M.-S. Li, Z.-J. Zheng,
			and S.-M. Fei, The local indistinguishability of multipartite product
			states, \href{https://doi.org/10.1007/s11128-016-1477-7}{Quantum Inf. Processing {\bf16}, 5 (2017)}.
			
			\bibitem{Zhang-Oh-2017} Z.-C. Zhang, K.-J. Zhang, F. Gao, Q.-Y. Wen,
			and C. H. Oh, Construction of nonlocal multipartite quantum states,
			\href{https://doi.org/10.1103/PhysRevA.95.052344}{Phys. Rev. A {\bf95}, 052344 (2017)}. 
			
				\bibitem{halder} S. Halder, Several nonlocal sets of multipartite
			pure orthogonal product states, \href{https://doi.org/10.1103/PhysRevA.98.022303}{Phys. Rev. A {\bf98}, 022303 (2018)}.
			
			\bibitem{Li18}M.-S. Li and Y.-L. Wang, Alternative method for deriving nonlocal multipartite product states, \href{https://journals.aps.org/pra/abstract/10.1103/PhysRevA.98.052352}{Phys. Rev. A \textbf{98}, 052352 (2018).}
			
			\bibitem{Xu20b} D.-H. Jiang, and G.-B. Xu, Nonlocal sets of orthogonal product states in arbitrary multipartite quantum system, \href{https://journals.aps.org/pra/abstract/10.1103/PhysRevA.102.032211}{Phys. Rev. A  \textbf{102}, 032211 (2020).}
			
			
				
			
			\bibitem{Halder1909}S. Halder, and C. Srivastava, Locally distinguishing quantum states with limited classical communication,  	\href{10.1103/PhysRevA.101.052313}{Phys. Rev. A {\bf 101}, 052313 (2020).}	
			
			\bibitem{Halder20c}	S. Halder, R. Sengupta, 	Distinguishability classes, resource sharing, and bound entanglement distribution,  \href{https://journals.aps.org/pra/abstract/10.1103/PhysRevA.101.012311}{Phys. Rev. A \textbf{101},  012311 (2020).}
			
			\bibitem{Halder19}S. Halder, M. Banik, S. Agrawal, and S. Bandyopadhyay, Strong Quantum Nonlocality without Entanglement, 	\href{https://journals.aps.org/prl/abstract/10.1103/PhysRevLett.122.040403}{Phys. Rev. Lett. \textbf{122}, 040403 (2019).}
			
			\bibitem{Zhang1906}Z.-C. Zhang and X. Zhang, Strong quantum nonlocality in multipartite quantum systems, \href{https://journals.aps.org/pra/abstract/10.1103/PhysRevA.99.062108}{Phys. Rev. A \textbf{99}, 062108 (2019).}
			
			\bibitem{Shi20S}     F. Shi, M. Hu, L. Chen, and X. Zhang, Strong quantum nonlocality with entanglement, \href{https://journals.aps.org/pra/abstract/10.1103/PhysRevA.102.042202}{Phys. Rev. A \textbf{102},  042202 (2020).}
			
			\bibitem{Tian20} P. Yuan, G. J. Tian, and X. M. Sun, Strong quantum nonlocality without entanglement in multipartite quantum systems,	\href{https://journals.aps.org/pra/abstract/10.1103/PhysRevA.102.042228}{Phys. Rev. A \textbf{102},  042228 (2020).}
			
			\bibitem{Wang21}	  Y.-L.   Wang,   M.-S.   Li,   and   M.-H.   Yung,    Graph-connectivity-based strong quantum nonlocality with genuine entanglement, \href{https://journals.aps.org/pra/abstract/10.1103/PhysRevA.104.012424}{Phys. Rev. A 104, 012424 (2021). } 
			
			
			
			
			\bibitem{Banik21} M. Banik, T. Guha, M. Alimuddin, G. Kar, S. Halder, and S. S. Bhattacharya, Multicopy Adaptive Local Discrimination: Strongest Possible Two-Qubit Nonlocal Bases, \href{https://journals.aps.org/prl/abstract/10.1103/PhysRevLett.126.210505}{Phys. Rev. Lett. \textbf{126}, 210505 (2021).}
			
		 
			
			 
		
		
			\bibitem{Ha21}	D. Ha and Y. Kwon, Quantum nonlocality without entanglement: explicit dependence on prior probabilities of nonorthogonal mirror-symmetric states,  \href{https://www.nature.com/articles/s41534-021-00415-0}{npj Quantum Inf. \textbf{7} 81 (2021).}		
		
		
	
		
	
		
		
		
	
		
 
		\bibitem{Tian15} G.-J. Tian,   S.-X. Yu,  F. Gao,  Q.-Y. Wen and  C.H. Oh,
		Local discrimination of qudit lattice states via commutativity,
		\href{https://journals.aps.org/pra/abstract/10.1103/PhysRevA.92.042320 }{Phys. Rev. A  \textbf{92}, 042320 (2015).}	
		
	 
 
 
		\bibitem{Tian15_2} G.-J. Tian,   S.-X. Yu,  F. Gao,  Q.-Y. Wen and  C.H. Oh,		
		Local discrimination of four or more maximally entangled states,
		\href{https://journals.aps.org/pra/abstract/10.1103/PhysRevA.91.052314 }{Phys. Rev. A \textbf{91}, 052314 (2015).}
		
	
		
		

		
	
		
		\bibitem{Tian16} G.-J. Tian,   S.-X. Yu,  F. Gao and  Q.-Y. Wen,
		Classification of locally distinguishable and indistinguishable sets of maximally entangled states,
		\href{https://journals.aps.org/pra/abstract/10.1103/PhysRevA.94.052315}{Phys. Rev. A \textbf{94}, 052315 (2016).}
		
		\bibitem{Singal15}
		T. Singal, R. Rahaman, S. Ghosh, G. Kar,
		Necessary condition for local distinguishability of maximally entangled states: Beyond orthogonality preservation,	\href{https://journals.aps.org/pra/abstract/10.1103/PhysRevA.96.042314}{Phys. Rev. A  \textbf{96}, 042314 (2017).}
		
		\bibitem{Wang17}	
		Y.-L. Wang,  M.-S. Li,  S.-M. Fei and  Z.-J. Zheng,  The local distinguishability of any three generalized Bell states, \href{https://link.springer.com/article/10.1007/s11128-017-1579-x}{Quantum Inf. Process.  \textbf{16}, 126 (2017).}
		
		
	
		%	\bibitem{Yang17}
		%Y.-H. Yang, C.-H. Wang, J.-T. Yuan, X. Wu and H.-J. Zuo,  Local distinguishability of generalized Bell states, \href{https://link.springer.com/article/10.1007/s11128-017-1797-2}{Quantum Inf Process  \textbf{17}, 29 (2018).}


	\bibitem{Yuan20}  J. T. Yuan, Y. H. Yang and C. H. Wang, Constructions of
locally distinguishable sets of maximally entangled states
which require two-way LOCC, \href{https://iopscience.iop.org/article/10.1088/1751-8121/abc43b}{J. Phys. A: Math. Theor.
\textbf{53}, 505304 (2020).}


	\bibitem{Yang21}Y. H. Yang, G. F. Mu, J. T. Yuan and C. H. Wang,
Distinguishability of generalized Bell states in arbitrary
dimension system via one-way local operations and classical communication, \href{https://link.springer.com/article/10.1007/s11128-021-02990-9}{Quant. Info. Proc. \textbf{20}, 52 (2021).}


\bibitem{Hashimoto21}	T. Hashimoto, M. Horibe, and A. Hayashi, 	Simple criterion for local distinguishability of generalized Bell states in prime dimension, \href{https://journals.aps.org/pra/abstract/10.1103/PhysRevA.103.052429}{Phys. Rev. A \textbf{103}, 052429 (2021).}
		
		
	\bibitem{Yuan21} J.-T. Yuan, Y.-H. Yang, C.-H. Wang, Necessary and sufficient conditions for local discrimination of generalized Bell states: finding out all locally indistinguishable sets of generalized Bell states, \href{https://arxiv.org/abs/2109.07390}{arXiv:2109.07390.}
	
	
		\bibitem{Nathanson2000}	Melvyn B. Nathanson, \emph{Elementary Methods in Number Theory},  Graduate Texts in Mathematics 195, 1 ed, 	Springer Verlag New York 2000.
	\end{thebibliography}
\end{document}